\newtheorem{theorem}{Theorem}[section]
\newtheorem{lemma}[theorem]{Lemma}
\newtheorem{corollary}[theorem]{Corollary}
\newtheorem{conjecture}[theorem]{Conjecture}
\theoremstyle{remark}
\newcommand{\redcomment}[1]{\textcolor{black}{\textrm{#1}}}
\begin{document}
\newcounter{my}
\newenvironment{mylabel}
{
\begin{list}{(\roman{my})}{
\setlength{\parsep}{-1mm}
\setlength{\labelwidth}{8mm}
\usecounter{my}}
}{\end{list}}

\newcounter{my2}
\newenvironment{mylabel2}
{
\begin{list}{(\alph{my2})}{
\setlength{\parsep}{-0mm} \setlength{\labelwidth}{8mm}
\setlength{\leftmargin}{3mm}
\usecounter{my2}}
}{\end{list}}

\newcounter{my3}
\newenvironment{mylabel3}
{
\begin{list}{(\alph{my3})}{
\setlength{\parsep}{-1mm}
\setlength{\labelwidth}{8mm}
\setlength{\leftmargin}{10mm}
\usecounter{my3}}
}{\end{list}}

\title{\bf Sufficient Conditions for Tuza's Conjecture on\\ Packing and Covering Triangles\thanks{Research supported in part  by NNSF of China under Grant No. 11531014 and 11222109, and by CAS Program for Cross \& Cooperative Team of Science  \& Technology Innovation.}}
\date{}
\maketitle

\vspace{-5em}

\begin{center}\large

\author{Xujin Chen \quad Zhuo Diao \quad Xiaodong Hu \quad Zhongzheng Tang }

\vspace{+1em}

 Academy of Mathematics and Systems Science \\ Chinese Academy of Sciences, Beijing 100190, China

\vspace{+1em}

\{xchen,diaozhuo,xdhu,tangzhongzheng\}@amss.ac.cn

\end{center}


\begin{abstract}
Given a simple graph $G=(V,E)$, a subset of $E$ is called a triangle cover if it intersects each triangle of $G$.
Let $\nu_t(G)$ and  $\tau_t(G)$ denote the maximum number of pairwise edge-disjoint triangles in $G$
and the minimum cardinality of a triangle cover of $G$, respectively.
Tuza conjectured in 1981 that  $\tau_t(G)/\nu_t(G)\le2$ holds for every graph $G$.
In this paper, using a hypergraph approach, we design polynomial-time combinatorial algorithms
for finding small triangle covers.
These algorithms imply new sufficient conditions for Tuza's conjecture on covering and packing triangles.
More precisely, suppose that the set $\mathscr T_G$ of triangles covers all edges in $G$. We
  show that  a triangle cover of $G$ with cardinality at most $2\nu_t(G)$ can be found in polynomial time
if one of the following conditions is satisfied:
(i) $\nu_t(G)/|\mathscr T_G|\ge\frac13$, (ii) $\nu_t(G)/|E|\ge\frac14$, (iii) $|E|/|\mathscr T_G|\ge2$.\\

\noindent{\bf Keywords:} Triangle cover, Triangle packing, Linear 3-uniform hypergraphs, Combinatorial algorithms

\end{abstract}

\section{Introduction}

Graphs considered in this paper are undirected, simple and finite (unless otherwise noted). Given a  graph $G=(V,E)$ with vertex set $V(G)=V$ and edge set $E(G)=E$, for convenience, we often identify a triangle in $G$ with its edge set. A subset   of $E$ is called a {\em triangle cover} if {it intersects} each triangle of $G$. Let $\tau_t(G)$ denote the minimum cardinality of a triangle cover of $G$, referred to as the {\em triangle covering number} of $G$. A set  of pairwise edge-disjoint triangles {in $G$} is called a {\em triangle packing} of $G$. Let $\nu_t(G)$ denote the maximum cardinality of a triangle packing of $G$, referred to as the {\em triangle  packing number} of $G$. It is clear that  $1\leq\tau_t(G)/\nu_t(G)\le3$ holds for every graph $G$. Our research is motivated by the following  conjecture {raised by   Tuza \cite{tuza1981} in 1981}.

\begin{conjecture}[Tuza's Conjecture \cite{tuza1981}] \label{coj:tuza}
 $\tau_t(G)/\nu_t(G)\le2$ holds for every graph $G$.
\end{conjecture}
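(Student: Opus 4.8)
The plan is to recast the problem as a covering-versus-packing question on a hypergraph and then attack it through a maximum packing. Associate with $G$ the \emph{triangle hypergraph} $H_G$ whose vertex set is $E(G)$ and whose hyperedges are the triangles $\mathscr T_G$ (each read as its $3$-element edge set). Since two distinct triangles share at most one common edge, $H_G$ is a \emph{linear} $3$-uniform hypergraph, and by construction a triangle cover is exactly a transversal of $H_G$ while a triangle packing is exactly a matching. Thus $\tau_t(G)=\tau(H_G)$ and $\nu_t(G)=\nu(H_G)$, and the greedy bound for $3$-uniform hypergraphs already gives $\tau_t(G)\le 3\nu_t(G)$; the whole difficulty is to shave the factor from $3$ down to $2$. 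Before starting I would make two harmless reductions: delete every edge lying in no triangle (this changes neither $\tau_t$ nor $\nu_t$), so that $\mathscr T_G$ covers $E$, and pass to a minimal counterexample, which lets me assume $G$ is connected and edge-minimal with ratio exceeding $2$.

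Second, I would fix a maximum triangle packing $\mathcal P$ with $|\mathcal P|=\nu_t(G)$ and try to read a cover of size $2\nu_t(G)$ directly off $\mathcal P$. By maximality every triangle of $G$ shares at least one edge with some member of $\mathcal P$, so it is natural to \emph{reject} exactly one edge of each packing triangle and keep the other two; the union of kept edges has size exactly $2\nu_t(G)$, and the only remaining task is to choose the rejected edges so that every triangle of $G$ retains at least one kept edge. Because linearity forces each triangle $T$ to meet a packing triangle in at most one edge, the constraint imposed by $T$ is local: if $T$ meets $\mathcal P$ in the single edge $e\in P$, then $e$ must not be rejected at $P$. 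The proof would succeed if these local constraints were always simultaneously satisfiable.

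The main obstacle is precisely that they need not be: a packing triangle $P=\{a,b,c\}$ can be surrounded by three further triangles meeting $P$ in $a$, in $b$, and in $c$ respectively, forcing all three edges of $P$ to be kept and breaking the clean two-per-triangle accounting. The canonical instance is $K_4$, where $\nu_t=1$, the unique packing triangle has all three edges forced, yet the optimal cover of size $\tau_t=2$ is realized only by using an edge \emph{outside} the packing triangle. This shows the scheme must be enriched: one has to allow cover edges drawn from outside $\mathcal P$, or to augment and rearrange $\mathcal P$ near such dense spots, and then to balance the budget by a global charging (discharging) argument that pays for each over-constrained packing triangle out of savings elsewhere. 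Designing a charging scheme that provably never runs a deficit over all of $G$ is the crux, and it is here that a complete proof remains elusive; the sufficient conditions established in this paper correspond exactly to density regimes (controlling $\nu_t$ against $|\mathscr T_G|$ or $|E|$) in which this deficit can be bounded and the argument pushed through.
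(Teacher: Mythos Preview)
The statement you are attempting is Tuza's \emph{conjecture}, which the paper explicitly presents as an open problem; there is no proof of it in the paper to compare against. The paper's results (Theorems~\ref{th:condition} and~\ref{th:howgood}) establish the inequality only under extra hypotheses on $\nu_t(G)/|\mathscr T_G|$, $\nu_t(G)/|E|$, or $|E|/|\mathscr T_G|$, and you yourself acknowledge in your final paragraph that ``a complete proof remains elusive.'' So what you have written is not a proof at all but a sketch of a strategy together with an honest identification of where it breaks.

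That said, it is worth noting that your strategy and the paper's strategy diverge substantially. You propose fixing a maximum packing $\mathcal P$ and selecting two edges from each $P\in\mathcal P$ to form a cover, supplemented by some charging scheme to handle over-constrained triangles like those in $K_4$. The paper does something quite different: it removes a small set $\mathcal S$ of edges (a feedback vertex set of the triangle hypergraph $\mathcal H_G$) so that $\mathcal H_G\setminus\mathcal S$ is acyclic, whence $\tau(\mathcal H_G\setminus\mathcal S)=\nu(\mathcal H_G\setminus\mathcal S)$ by total unimodularity, and then takes $\mathcal S$ together with a minimum transversal of the residual hypergraph. The sufficient conditions (i)--(iii) are exactly what is needed to bound $|\mathcal S|\le\nu_t(G)$. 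Your $K_4$ obstruction is real for the two-edges-per-triangle scheme, but the paper sidesteps it entirely by never committing to edges drawn only from packing triangles; instead the feedback-set machinery absorbs the ``dense spots'' globally. Neither approach currently yields the full conjecture.
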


To the best of our knowledge, the conjecture is still {unsolved in general}. If {it is true}, then the upper bound 2   is sharp as shown by $K_4$ and $K_5$ -- the complete graphs of orders $4$ and $5$.

  \paragraph{Related work.} The only known universal upper bound smaller than 3 was given by Haxell \cite{haxell19}, who shown that $\tau_t(G)/\nu_t(G)\le 66/{23} =2.8695...$ holds for all graphs $G$. {Haxell's proof  \cite{haxell19} implies a polynomial-time algorithm for finding a triangle cover of cardinality at most 66/23 times that of a maximal triangle packing}. Other partial results on {Tuza's} conjecture concern with special classes of graphs.

Tuza \cite{tuza1990} proved {his} conjecture holds for planar graphs, $K_{5}$-free chordal graphs and graphs with $n$ vertices and at least $7n^{2}/16$ edges. {The proof for planar graphs \cite{tuza1990} gives an elegant polynomial-time algorithm for finding a triangle cover in planar graphs with cardinality at most twice that of a maximal triangle packing. The validity of Tuza's conjecture on the}  class of planar graphs was later {generalized} by Krivelevich \cite{Krivelevich1995}   to the class of graphs without $K_{3,3}$-subdivision.    Haxell and Kohayakawa \cite{HK1998} showed that $\tau_t(G)/\nu_t(G)\le2-\epsilon$ for tripartite graphs $G$, where $\epsilon > 0.044$.  Haxell, Kostochka and Thomasse \cite{HKT2012} proved that every $K_{4}$-free planar graph $G$ satisfies $\tau_t(G)/\nu_t(G)\le1.5$. 

Regarding the tightness of the {conjectured upper bound 2,   Tuza \cite{tuza1990} noticed that infinitely many graphs $G$ attain the conjectured upper bound $\tau_t(G)/\nu_t(G)=2$.} Cui,  Haxell and  Ma \cite{CHM2009} characterized   planar graphs $G$ satisfying $\tau_t(G)/\nu_t(G)=2${; these graphs are edge-disjoint unions of $K_4$'s} plus possibly some vertices and edges that are not in triangles. Baron and  Kahn \cite{BK2014} proved that Tuza's conjecture is asymptotically tight for dense graphs.

Fractional and weighted variants of Conjecture \ref{coj:tuza} were studied in literature. Krivelevich \cite{Krivelevich1995} {proved} two fractional versions of the conjecture: $\tau_t(G)\leq2\nu^{\ast}_t(G)$ and $\tau^{\ast}_t(G)\leq2\nu_t(G)${, where $\tau^{\ast}_t(G)$ and $\nu^{\ast}_t(G)$ are the values of an optimal fractional triangle cover and an optimal fractional triangle packing of $G$, respectively. The result was generalized by Chapuy  et al. \cite{CDDMS2015} to the weighted version}, which amounts to packing and covering triangles in multigraphs {$G_w$ (obtained from $G$ by adding multiple edges). The authors \cite{CDDMS2015} showed that  $\tau(G_w)\leq2\nu^{\ast}(G_w)-\sqrt{\nu^{\ast}(G_w)/6}+1$ and $\tau^{\ast}(G_w)\leq2\nu(G_w)$; the arguments imply an LP-based 2-approximation algorithm for finding a minimum weighted triangle cover.}

\paragraph{Our contributions.} Along a different line, we establish new sufficient conditions for validity of  Tuza's conjecture by comparing the {triangle} packing number, the number of triangles and the number of edges. {Given a graph $G$, we use $\mathscr T_G=\{E(T):T$ is a triangle in $G\}$ to denote the set consisting of the (edge sets of) triangles in $G$.} Without loss of generality, we focus on the graphs in which every edge is contained in some triangle. These graphs are called {\em irreducible}.
\begin{theorem}\label{th:condition}
Let  $G=(V,E)$ be an irreducible graph. Then a triangle cover of $G$ with cardinality at most $2\nu_t(G)$ can be found in polynomial time, which implies $\tau_t(G)\le2\nu_t(G)$, if one of the following conditions is satisfied:
(i) $\nu_t(G)/|\mathscr T_G|\ge\frac13$, (ii) $\nu_t(G)/|E|\ge\frac14$, (iii) $|E|/|\mathscr T_G|\ge2$.
\end{theorem}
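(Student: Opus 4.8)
The plan is to reformulate the problem in the language of 3-uniform hypergraphs: associate with the irreducible graph $G$ the hypergraph $\mathcal{H}$ whose vertex set is $E(G)$ and whose hyperedges are the triples in $\mathscr{T}_G$. Since $G$ is simple, two distinct triangles share at most one edge, so $\mathcal{H}$ is \emph{linear} (any two hyperedges meet in at most one vertex). A triangle cover of $G$ is exactly a vertex cover (transversal) of $\mathcal{H}$, and a triangle packing is a matching in $\mathcal{H}$; thus $\tau_t(G)=\tau(\mathcal{H})$ and $\nu_t(G)=\nu(\mathcal{H})$. The target inequality $\tau_t(G)\le 2\nu_t(G)$ becomes $\tau(\mathcal{H})\le 2\nu(\mathcal{H})$ for this special class of linear 3-uniform hypergraphs, and I would aim to prove it constructively.

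The core engine I would build is a greedy/local-modification algorithm. Start from a maximal triangle packing $M=\{T_1,\dots,T_{\nu}\}$ (maximality, not maximum, suffices and is polynomial-time). Every triangle of $G$ shares an edge with some $T_i$. Picking all three edges of each $T_i$ gives a triangle cover of size $3\nu$; the issue is to shave this down to $2\nu$. The natural idea is to assign to each $T_i$ only \emph{two} of its three edges, chosen so that the union still hits every triangle. This is possible precisely when, for every triangle $T\notin M$, at least one $T_i$ meeting $T$ can "afford" to contribute an edge lying on $T$. I would set this up as a covering problem on the packing triangles themselves and look for a combinatorial obstruction: a triangle $T\notin M$ is \emph{dangerous} if it meets each $T_i$ it intersects only in the unique edge that $T_i$ must keep. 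By a counting argument I would bound the number of dangerous triangles, the number of edges they involve, and the number of packing triangles, and show that under each of the three density hypotheses (i) $\nu_t/|\mathscr{T}_G|\ge 1/3$, (ii) $\nu_t/|E|\ge 1/4$, (iii) $|E|/|\mathscr{T}_G|\ge 2$ there is simply no room for a dangerous triangle to exist, or else the packing could be augmented, contradicting maximality.

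Concretely, for part (i) I expect the cleanest route: if $\nu_t(G)\ge |\mathscr{T}_G|/3$ then the total number of triangles is at most $3\nu_t(G)$, so besides the $\nu_t$ packing triangles there are at most $2\nu_t$ other triangles; a careful accounting — charging each non-packing triangle to the packing triangle(s) it overlaps and using linearity to limit multiple charges — should let me drop one edge per packing triangle and still cover everything, giving a cover of size $\le 2\nu_t$. For parts (ii) and (iii) I would derive them either directly by analogous edge-counting (each packing triangle uses $3$ edges, so $|E|\ge 3\nu_t$ is automatic, and hypothesis (ii) says $|E|\le 4\nu_t$, leaving only $\nu_t$ "extra" edges to worry about) or by showing each hypothesis implies a structural consequence that reduces to the situation handled in (i). Throughout, whenever the accounting fails I would exhibit two edge-disjoint triangles inside the supposedly-problematic configuration that extend $M$, contradicting maximality — this augmentation step is what forces the bounds.

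The main obstacle I anticipate is the overlap bookkeeping: a single packing triangle $T_i$ can be met by many non-packing triangles, and those non-packing triangles can also meet each other, so naively charging edges double-counts and the constant slips above $2$. Controlling this is exactly where linearity of $\mathcal{H}$ and the maximality of $M$ must be used in tandem — linearity caps how a non-packing triangle can wrap around $M$ (it can touch $T_i$ in at most one edge, hence touches at most three members of $M$, one per edge), and maximality rules out the dense local clusters that would otherwise break the count. I would also need to be careful that the edge-dropping choices for different $T_i$ are globally consistent; I expect to handle this by processing the packing triangles in a fixed order and, when a conflict arises, locally rerouting the cover using a spare edge guaranteed by the relevant density hypothesis. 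Making these three threads (linearity, maximality, density) interlock cleanly is the crux of the argument, and the polynomial running time will follow because every step — finding a maximal packing, detecting dangerous triangles, performing an augmentation or a reroute — is a local search over $O(|E|)$ or $O(|\mathscr{T}_G|)$ objects.
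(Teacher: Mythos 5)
Your opening reduction to the linear $3$-uniform triangle hypergraph $(E,\mathscr T_G)$ coincides with the paper's starting point, but everything after that is a plan whose load-bearing steps are missing, and the mechanism you propose is not the one that makes the theorem go through. Your core engine is: take a maximal packing $M$, keep only two edges of each packing triangle, and argue that any ``dangerous'' triangle either cannot exist under the density hypotheses or forces an augmentation of $M$. Neither half of that dichotomy is established. A dangerous triangle does not in any evident way produce two edge-disjoint triangles extending $M$ (maximality only says every triangle shares an edge with $M$; it gives no handle for augmentation once $M$ is maximal), and you give no actual counting argument showing that $|\mathscr T_G|\le 3\nu_t(G)$ or $|E|\le 4\nu_t(G)$ leaves ``no room'' for conflicts among the edge-dropping choices -- the conflicts are global, and your proposed fix (``reroute using a spare edge'') has no specification or invariant. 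This two-edges-per-packing-triangle scheme is exactly the naive approach to Tuza's conjecture that is known not to close, and the density hypotheses are not used in any concrete way that rescues it; so parts (i) and (iii) remain unproven in your write-up, and the claimed polynomial-time cover is not exhibited.

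The paper's actual route is different and is worth internalizing. Its key lemma is that every linear $3$-uniform hypergraph admits a feedback vertex set of size at most one third of its number of hyperedges, found in polynomial time (proved by analyzing a minimum counterexample, reducing degrees to $2$ and splitting on the length of a shortest cycle modulo $3$); and that an acyclic such hypergraph satisfies $\tau=\nu$ with a polynomial algorithm, via restricted total unimodularity of the incidence matrix. Under (i), the FVS $\mathcal S$ of the triangle hypergraph has size at most $|\mathscr T_G|/3\le\nu_t(G)$, and a minimum transversal of the acyclic remainder has size $\nu(\mathcal H_G\setminus\mathcal S)\le\nu_t(G)$; their union is the desired cover. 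Under (iii), one instead deletes a minimal feedback \emph{edge} set $\mathcal F$ of triangles, bounded by $|\mathcal F|\le 2|\mathscr T_G|-|E|+p\le p\le\nu_t(G)$ using the identity $|\mathcal V|=2|\mathcal E|+1$ for connected acyclic linear $3$-uniform hypergraphs (here irreducibility guarantees $p\le\nu_t(G)$), then takes one graph edge from each triangle of $\mathcal F$ plus a minimum transversal of the acyclic remainder. Finally, (ii) has a one-line proof you missed entirely: every simple graph contains a bipartite subgraph with at least $|E|/2$ edges, so the remaining at most $|E|/2\le 2\nu_t(G)$ edges already form a triangle cover; no packing-based bookkeeping is needed there.
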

The primary idea behind the theorem is simple: any one of conditions (i) -- (iii) allows us to remove at most $\nu_t(G)$ edges from $G$ to make the resulting graph $G'$ satisfy $\tau_t(G')=\nu_t(G')$; the removed edges and the edges in a minimum triangle cover of  $G'$ form a triangle cover of $G$ with size at most $\nu_t(G)+\nu_t(G')\le2\nu_t(G)$. The idea is realized by establishing new results on linear 3-uniform hypergraphs (see Section \ref{sec:hypergraph}); the most important one states that such a hypergraphs could be made acyclic by removing {a number of vertices that is no more than a third of the number of its edges}. A key observation here is that hypergraph  $(E,\mathscr T_G)$ is linear and 3-uniform.


To show the qualities of conditions (i) -- (iii) in Theorem \ref{th:condition}, we obtain the following result  which complements to the constants $\frac13$, $\frac14$ and 2 in these conditions with $\frac14$, $\frac15$ and $\frac32$, respectively.
\begin{theorem}\label{th:howgood}
Tuza's conjecture holds for every graph if there exists some real $\delta>0$ such that Tuza's conjecture holds for every irreducible graph $G$ satisfying one of the following properties: (i') $\nu_t(G)/|\mathscr T_G|\ge \frac14-\delta$, (ii') $\nu_t(G)/|E|\ge\frac15-\delta$, (iii') $|E|/|\mathscr T_G|\ge\frac32-\delta $.
\end{theorem}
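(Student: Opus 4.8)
The plan is to argue by contraposition: assuming Tuza's conjecture fails for some graph, we will manufacture, for every fixed $\delta>0$, an irreducible counterexample that additionally satisfies the corresponding density property (i$'$), (ii$'$) or (iii$'$). The natural device is a \emph{blow-up / disjoint-union amplification}: start from a putative counterexample $H$ with $\tau_t(H)>2\nu_t(H)$, take many vertex-disjoint copies of $H$, and then splice in a graph that is rich in triangles but whose triangles are forced into any packing and cover ``for free'', thereby driving the ratios $\nu_t/|\mathscr T|$, $\nu_t/|E|$, $|E|/|\mathscr T|$ toward the threshold constants $\tfrac14$, $\tfrac15$, $\tfrac32$ from above while preserving $\tau_t>2\nu_t$. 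Concretely, disjoint unions are additive in all four parameters $\tau_t,\nu_t,|E|,|\mathscr T|$, so $k$ disjoint copies of a single gadget keep the ratios fixed; the work is to \emph{design one gadget} per condition whose ratios lie on the correct side of the threshold and which is itself a counterexample (or can be combined with a small counterexample without destroying the counterexample property, again using additivity).

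The first step is to pin down the extremal building blocks. The graph $K_4$ has $\nu_t(K_4)=1$, $\tau_t(K_4)=2$, $|E(K_4)|=6$, and $|\mathscr T_{K_4}|=4$, giving exactly $\nu_t/|\mathscr T|=\tfrac14$, $\nu_t/|E|=\tfrac16$, $|E|/|\mathscr T|=\tfrac32$. So $K_4$ sits on the boundary of (i$'$) and (iii$'$). Taking disjoint copies of $K_4$ yields graphs with $\tau_t=2\nu_t$ (tight but not strict), so I would instead combine $m$ disjoint copies of $K_4$ with one fixed strict counterexample $H_0$ (which exists under our contraposition hypothesis); additivity gives $\tau_t>2\nu_t$ for the union, while the ratios converge to the $K_4$-values $\tfrac14$ and $\tfrac32$ as $m\to\infty$. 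Choosing $m$ large enough places the union in (i$'$) and (iii$'$) for any given $\delta>0$. For (ii$'$), since $K_4$ only gives $\nu_t/|E|=\tfrac16<\tfrac15$, I need a denser-in-packing gadget: a graph on few edges carrying many edge-disjoint triangles, e.g.\ a ``book'' or a suitable near-$K_5$ configuration — $K_5$ itself has $\nu_t=2$, $|E|=10$, ratio $\tfrac15$ exactly — so disjoint $K_5$'s (which also satisfy $\tau_t(K_5)=2\nu_t(K_5)=4$) combined with a fixed strict counterexample $H_0$ push $\nu_t/|E|$ up to $\tfrac15$ in the limit, landing in (ii$'$) for any $\delta>0$.

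The remaining bookkeeping is to verify each claimed parameter value ($\nu_t,\tau_t$ for $K_4$ and $K_5$ are standard; additivity over disjoint unions is immediate) and to solve the elementary inequalities that determine how many gadget copies are needed as a function of $\delta$ and of the fixed counterexample $H_0$'s parameters — routine and I would not grind through it. One should also confirm the constructions stay \emph{irreducible}: $K_4$, $K_5$ and $H_0$ can each be assumed irreducible (delete edges in no triangle, which changes none of $\tau_t,\nu_t$ and only shrinks $|E|,|\mathscr T|$ in our favor for (i$'$),(iii$'$), though for (ii$'$) one checks the reduction does not pull the ratio below threshold — it does not, since deleting a triangle-free edge only increases $\nu_t/|E|$). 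The main obstacle is conceptual rather than technical: one must be sure that the chosen gadget's ratio is strictly on the \emph{correct} side (strictly above $\tfrac14-\delta$, etc.) after mixing in $H_0$, i.e.\ that $H_0$'s own (possibly bad) ratios are diluted in the right direction; this forces the gadget to be chosen so that \emph{its} ratio already meets or exceeds the threshold, which is exactly why $K_4$ and $K_5$ — the very graphs witnessing sharpness of Tuza's bound — are the right choices, and why the complementary constants come out to be $\tfrac14$, $\tfrac15$, $\tfrac32$.
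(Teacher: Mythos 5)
Your proposal is correct and is essentially the paper's own argument in contrapositive form: the paper (Theorems \ref{th:a-fourth}, \ref{th:1/5}, \ref{th:3/2}) pads an arbitrary graph with an explicitly computed number of disjoint $K_4$'s (for the $\nu_t/|\mathscr T_G|$ and $|E|/|\mathscr T_G|$ thresholds) or $K_5$'s (for $\nu_t/|E|$), using exactly the additivity of $\tau_t,\nu_t,|E|,|\mathscr T_G|$ over disjoint unions and the tightness $\tau_t=2\nu_t$ of these gadgets, then transfers the bound back. Your version instead dilutes a putative counterexample with many gadget copies and passes to the limit, which is the same construction and the same key facts, so there is nothing substantive to add.
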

We also investigate Tuza's conjecture on classical Erd\H{o}s-R\'{e}nyi random graph $\mathcal G(n,p)$, and prove that $\text{\bf Pr}[\tau_t(G)/\nu_t(G)\le2]=1-o(1)$ provided $G\in\mathcal G(n,p)$ and $p> \sqrt{3}/2$.

\redcomment{It is worthwhile pointing out that strengthening Theorem \ref{th:condition}, our arguments actually establish stronger results for linear 3-uniform hypergraphs (see Theorem \ref{th:condition'})}.

\medskip
{The rest of paper is organized as follows. Section \ref{sec:hypergraph} proves theoretical and algorithmic results on linear 3-uniform hypergraphs concerning feedback sets, which are main technical tools for establishing new sufficient conditions for Tuza's conjecture in Section \ref{sec:tuza}. Section \ref{sec:conclude} concludes the paper with extensions and future research directions.}

\section{Hypergraphs}\label{sec:hypergraph}

{This section develops hypergraph tools for studying Tuza's conjecture. The theoretical and algorithmic results are of interest in their own right.}

Let $\mathcal H=(\mathcal V,\mathcal E)$ be a hypergraph with vertex set $\mathcal V$ and edge set $\mathcal E$. For convenience, we use $|\!|\mathcal H|\!|$ to denote the number $|\mathcal E|$ of edges in $\mathcal H$.  If hypergraph $\mathcal H'=(\mathcal V',\mathcal E')$ satisfies $\mathcal V' \subseteq \mathcal V$  and $ \mathcal E' \subseteq \mathcal E$, we call $\mathcal H'$ a {\em sub-hypergraph} of $\mathcal H$, and write $\mathcal H'\subseteq \mathcal H$. For each $v\in \mathcal V$, the {\em degree} $d_{\mathcal H}(v)$ is the number of edges in $\mathcal E$ that {contain} $v$. We say $v$ is an {\em isolated vertex} of $\mathcal H$ if $d_{\mathcal H}(v)=0$. Let $k\in\mathbb N$ be a positive integer, hypergraph $\mathcal H$ is called  {\em $k$-regular}  if $d_{\mathcal H}(u)=k$ for each $u\in \mathcal V$, and   {\em $k$-uniform} if $|e|=k$ for each $e\in \mathcal E$. Hypergraph $\mathcal H$ is   {\em linear} if $|e\cap f|\le1$ for any pair of distinct edges $e,f\in \mathcal E$.

A vertex-edge alternating sequence $ v_{1}e_{1}v_{2}...v_{k}e_{k}v_{k+1}$ of $\mathcal H$ is called a {\em path} (of {\em length} $k$) between $v_{1}$ and $v_{k+1}$ if $v_{1}, v_{2},..., v_{k+1}\in\mathcal V$ are distinct, $ e_{1}, e_{2},..., e_{k}\in\mathcal E$ are distinct, and  $\{v_{i},v_{i+1}\}\subseteq e_{i}$ for each $i\in [k]=\{1,\ldots,k\}$. We consider each vertex of $\mathcal H$ as a path of length 0.  Hypergraph $\mathcal H$ is said to be {\em connected} if  there is a path  between any pair of distinct vertices in $\mathcal H$. A maximal connected sub-hypergraph of $\mathcal H$  is called a {\em component} of $\mathcal H$. Obviously, $\mathcal H$ is connected if and only if it has only one component.

  A vertex-edge alternating sequence   $\mathcal C= v_{1}e_{1}v_{2}e_{2}...v_{k}e_{k}v_{1}$, where $k\ge2$, is called a {\em cycle} (of length $k$) if $v_{1}, v_{2},..., v_{k}\in\mathcal V$ are distinct, $ e_{1}, e_{2},..., e_{k}\in \mathcal E$ are distinct, and  $\{v_{i},v_{i+1}\}\subseteq e_{i}$ for each $i\in [k]$, where $v_{k+1}=v_{1}$. We consider the cycle $\mathcal C$ as a sub-hypergraph of $\mathcal H$ with vertex set $\cup_{i\in[k]}e_i$ and edge set {$\{e_{i}: i\in [k]\}$}. For any $\mathcal S\subset\mathcal V$ (resp. $S\subset\mathcal E$), we write $\mathcal H\setminus \mathcal S$  for the  sub-hypergraph of $\mathcal H$ obtained from $\mathcal H$ by deleting all vertices in $\mathcal S$ and all edges  incident with some vertices in $\mathcal S$ (resp. deleting all edges in $\mathcal E$ and keeping vertices). If $\mathcal S$ is a singleton set $\{s\}$, we write $\mathcal H\setminus s$ instead of $\mathcal H\setminus \{s\}$. For any $\mathcal S\subseteq2^{\mathcal V}$, the hypergraph  $(\mathcal V,\mathcal E\cup \mathcal S)$ is often written as $\mathcal H\uplus\mathcal E$, and as $\mathcal H\oplus\mathcal S$ if $\mathcal S\cap\mathcal E=\emptyset$.

A vertex (resp. edge) subset  of $\mathcal H$ is called a {\em feedback vertex set} or FVS (resp. {\em feedback edge set} or FES)  of $\mathcal H$ if it intersects the vertex (resp. edge) set  {of} every cycle of $\mathcal H$. A vertex subset of $\mathcal H$ is called a {\em transversal} of $\mathcal H$ if it intersects every edge of $\mathcal H$. Let  $\tau^{{}_{\mathcal V}}_c(\mathcal H)$,  $\tau_c^{{}_{\mathcal E}}(\mathcal H)$ and $\tau(\mathcal H)$ denote, respectively, the minimum cardinalities of a FVS, a FES, and a transversal  of $\mathcal H$. A {\em matching} of $\mathcal H$ is an nonempty set of pairwise disjoint edges of $\mathcal H$. Let $\nu(\mathcal H)$
denote the maximum cardinality of a matching of $\mathcal H$. It is easy to see that $\tau^{{}_{\mathcal V}}_c(\mathcal H)\leq \tau_c^{{}_{\mathcal E}}(\mathcal H)$, $\tau^{{}_{\mathcal V}}_c(\mathcal H)\leq \tau(\mathcal H)$ and $ \nu(\mathcal H)\le\tau(\mathcal H)$. {Our discussion will frequently use the trivial observation that if no cycle of $\mathcal H$ contains any element of some subset $\mathcal S$ of $\mathcal V\cup\mathcal E$, then $\mathcal H$ and $\mathcal H\setminus S$ have the same set of FVS's, and $\tau_c^{{}_{\mathcal V}}(\mathcal H)= \tau_c^{{}_{\mathcal V}}(\mathcal H\setminus \mathcal S)$. The following theorem is one of main contributions of this paper.}

\begin{theorem}\label{1/3}
Let  $\mathcal H$ be a linear $3$-uniform hypergraph. Then $\tau_c^{{}_{\mathcal V}}(\mathcal H)\le |\!|\mathcal H|\!|/3$.
\end{theorem}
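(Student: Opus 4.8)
The plan is to prove the statement by strong induction on $m=|\!|\mathcal H|\!|$; the argument is constructive, so it also yields the polynomial‑time algorithm alluded to in the paper's overview. If $\mathcal H$ has no cycle then $\tau_c^{{}_{\mathcal V}}(\mathcal H)=0$ and we are done; since $\mathcal H$ is linear, every cycle has length at least $3$, so this already covers every $\mathcal H$ with $m\le 2$. Assume henceforth that $\mathcal H$ has a cycle, so $m\ge 3$. First reduction: if some $v\in\mathcal V$ has $d_{\mathcal H}(v)\ge 3$, then $\mathcal H\setminus v$ is again linear and $3$‑uniform with $|\!|\mathcal H\setminus v|\!|\le m-3$, so by induction it has an FVS $\mathcal S'$ with $|\mathcal S'|\le(m-3)/3$, and $\mathcal S'\cup\{v\}$ is an FVS of $\mathcal H$ of size at most $m/3$. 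Second reduction: $\tau_c^{{}_{\mathcal V}}$ is additive over components, so we may assume $\mathcal H$ is connected; and if some edge $e$ lies in no cycle of $\mathcal H$, then the observation stated just before the theorem (with $\mathcal S=\{e\}$) gives $\tau_c^{{}_{\mathcal V}}(\mathcal H)=\tau_c^{{}_{\mathcal V}}(\mathcal H\setminus e)$, where $\mathcal H\setminus e$ keeps all vertices and has $m-1$ edges, so induction finishes this case too. We are thus reduced to the \emph{core case}: $\mathcal H$ is connected, $\Delta(\mathcal H)\le 2$, and every edge lies in a cycle. In the core case every edge $e$ has at least two vertices of degree exactly $2$, namely the vertices $v_i,v_{i+1}$ of $e$ when $e=e_i$ appears in a cycle $v_1e_1\cdots v_ke_kv_1$.

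In the core case I would pass to the \emph{intersection graph} $\Gamma$: its vertices are the edges of $\mathcal H$, and for each degree‑$2$ vertex $w$ of $\mathcal H$, which lies in exactly two edges $e,f$, we put into $\Gamma$ an edge $ef$ labelled $w$. Linearity makes $\Gamma$ simple; because $\Delta(\mathcal H)\le 2$, a vertex‑edge alternating sequence is a cycle of $\mathcal H$ exactly when the corresponding sequence of edges is a cycle of $\Gamma$, so a sub‑hypergraph of $\mathcal H$ is acyclic iff its intersection graph is a forest. One checks that $\Gamma$ is connected, has $m$ vertices, maximum degree $\le 3$ (each edge of $\mathcal H$ has only three vertices) and minimum degree $\ge 2$. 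The decisive point is that removing a single degree‑$2$ vertex $w$ from $\mathcal H$ deletes the \emph{two} edges through $w$, i.e. the two $\Gamma$‑vertices joined by the edge labelled $w$; hence if $F\subseteq E(\Gamma)$ is chosen so that $\Gamma-V(F)$ (the subgraph induced on the vertices not covered by $F$) is a forest, and $W$ is the set of labels of the edges of $F$, then $\mathcal H\setminus W$ has edge set $\mathcal E\setminus V(F)$, its intersection graph is a subgraph of the forest $\Gamma-V(F)$, so $\mathcal H\setminus W$ is acyclic, and $|W|=|F|$ because distinct edges of $\Gamma$ carry distinct labels (linearity again). Thus everything reduces to the purely graph‑theoretic claim:
\[
\text{every connected }\Gamma\text{ with }2\le\delta(\Gamma)\le\Delta(\Gamma)\le 3\text{ has }F\subseteq E(\Gamma)\text{ with }\Gamma-V(F)\text{ a forest and }|F|\le|V(\Gamma)|/3 .
\]

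Proving this graph claim is, I expect, the main obstacle. The natural first attempt is to take an inclusion‑maximal induced forest $T$ of $\Gamma$ and set $U=V(\Gamma)\setminus V(T)$: maximality forces each $u\in U$ to have two neighbours in a single tree of $T$, hence at most one neighbour in $U$, so $\Gamma[U]$ is a matching $N$ together with a set $I$ of isolated vertices, and $F=N\cup\{e_u:u\in I\}$ (with $e_u$ joining $u$ to $V(T)$) makes $\Gamma-V(F)$ a subgraph of $T$, with $|F|=|N|+|I|$, which is the number of components of $\Gamma[U]$; one then needs this number to be at most $|V(\Gamma)|/3$. However a \emph{maximum} induced forest need not have a well‑clustered complement — in the cube graph $Q_3$, for instance, a careless choice of maximum induced forest leaves a $3$‑vertex independent complement, which is too much — so $T$ (equivalently $F$) must be chosen with the clustering of $V(F)$ in mind, and I would establish the bound by a separate induction on $|V(\Gamma)|$ inside this class of graphs: a degree‑$2$ vertex whose two neighbours are non‑adjacent can be suppressed to pass to a strictly smaller graph in the class; in the remaining, essentially cubic, situation one analyses a shortest cycle $\mathcal C$ and argues that deleting one endpoint of a suitable edge of $\mathcal C$ drops some other vertex of $\mathcal C$ to degree $\le 1$, so that vertex leaves every cycle and can be discarded at no cost — which is exactly the extra $\tfrac13$ per deletion that is needed. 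The equalities attained by $K_4$, $K_{3,3}$ and $Q_3$ (all genuine intersection graphs of linear $3$‑uniform hypergraphs) show there is no slack, and pinning down the shortest‑cycle endgame, especially the girth‑$\ge 4$ configurations in which every vertex of $\mathcal C$ has degree exactly $2$, is where the real work lies.
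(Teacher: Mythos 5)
Your reductions---additivity over components, discarding an edge that lies on no cycle, and deleting a vertex of degree at least $3$ at unit cost against at least three edges---are sound and coincide with steps (1) and (2) of the paper's proof, and the passage to the intersection graph $\Gamma$ (vertices = hyperedges, edges = degree-$2$ vertices) is a legitimate reformulation of the core case: with $\Delta(\mathcal H)\le 2$ and linearity, cycles of $\mathcal H$ and cycles of $\Gamma$ correspond, and a set of degree-$2$ vertices is an FVS of $\mathcal H$ exactly when the corresponding edge set $F$ makes $\Gamma-V(F)$ a forest. The problem is that at this point nothing substantive has been proved: your ``graph claim'' (every connected $\Gamma$ with $2\le\delta(\Gamma)\le\Delta(\Gamma)\le3$ has $F\subseteq E(\Gamma)$ with $|F|\le|V(\Gamma)|/3$ and $\Gamma-V(F)$ a forest) is not a known lemma you can cite; it is essentially equivalent to the core case of the theorem itself, since every graph of maximum degree $3$ arises as such an intersection graph after padding hyperedges with degree-$1$ vertices. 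You explicitly leave its proof open (``the main obstacle,'' ``where the real work lies''), so the proposal is a correct reformulation plus a plan, not a proof.

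Moreover, the plan as sketched has concrete holes. Suppressing a degree-$2$ vertex $x$ with nonadjacent neighbours $a,b$ does not obviously lift: if the solution $F'$ of the suppressed graph uses the new edge $ab$, replacing it by $ax$ (or $xb$) leaves $b$ (resp.\ $a$) undeleted, and a cycle through that vertex avoiding $V(F)$ need not be destroyed, while taking both $ax$ and $xb$ costs $|F'|+1$, which exceeds $m/3$; and after all such suppressions the graph is not ``essentially cubic,'' since degree-$2$ vertices whose two neighbours are adjacent (triangle configurations) remain. The shortest-cycle endgame you defer is exactly where the paper does its work: after the same two reductions it additionally eliminates degree-$1$ vertices of $\mathcal H$ (deleting the three consecutive cycle edges $e_1,e_2,e_3$ and paying the single FVS vertex $v_3$) so as to reach $2$-regularity, and then runs a minimal-counterexample analysis on a shortest cycle split according to $k\bmod 3$, with explicitly chosen deleted edge sets and FVS vertices (the sets $\mathcal V_*$, $\mathcal E_*$, plus a separate subcase $f_1=f_3$, $f_2=f_4$ when $k\equiv1\pmod 3$). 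Until you supply an argument of that kind for your graph claim---or carry out your suppression/shortest-cycle induction in full, repairing the lifting step---the proof is incomplete.
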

\begin{proof}  Suppose that the theorem failed. {We} take a counterexample $\mathcal H=(\mathcal V,\mathcal E)$ with $\tau_c^{{}_{\mathcal V}}(\mathcal H)> |\mathcal E|/3$ such that $|\!|\mathcal H|\!|=|\mathcal E|$ is as small as possible.
Obviously $|\mathcal E|\geq 3$. Without loss of generality, we can assume that $\mathcal H$ has no isolated vertices. Since $\mathcal H$ is linear, any cycle in $\mathcal H$ is of length at least 3.

If there exists $e\in \mathcal E$ which does not belong to any cycle of $\mathcal H$, then $\tau_c^{{}_{\mathcal V}}(\mathcal H)= \tau_c^{{}_{\mathcal V}}(\mathcal H\setminus e)$. The minimality of $\mathcal H=(\mathcal V,\mathcal E)$ implies $\tau_c^{{}_{\mathcal V}}(\mathcal H\setminus e)\leq (|\mathcal E|-1)/3$, giving $\tau_c^{{}_{\mathcal V}}(\mathcal H)< |\mathcal E|/3$, a contradiction. So we have
\begin{itemize}
\item[(1)]  Every edge in $\mathcal E$ is contained in some cycle of $\mathcal H$.
\end{itemize}

If there exists $v\in\mathcal V$ with $d_{\mathcal H}(v)\geq3$, then $\tau_c^{{}_{\mathcal V}}(\mathcal H\setminus v)\leq (|\mathcal E|-d_{\mathcal H}(v))/3\leq (|\mathcal E|-3)/3$, where the first inequality is due to the minimality of $\mathcal H$. Given a minimum FVS $\mathcal S$ of $\mathcal H\setminus v$, it is clear that $\mathcal S\cup \{v\}$ is a FVS of $\mathcal H$ with size $  |\mathcal S|+1=\tau_c^{{}_{\mathcal V}}(\mathcal H\setminus v)+1\leq |\mathcal E|/3$, a contradiction to $\tau_c^{{}_{\mathcal V}}(\mathcal H)> |\mathcal E|/3$. So we have

\begin{itemize}
\item[(2)] $ d_{\mathcal H}(v)\leq 2$ for all $v\in \mathcal V$.
\end{itemize}

Suppose that there exists $v\in\mathcal V$ with $d_{\mathcal H}(v)= 1$. Let $e_1\in\mathcal E$ be the unique edge that contains $v$. Recall   from (1) that $e_1$ is contained in a cycle  $\mathcal C= v_{1}e_{1}v_{2}e_{2}v_{3}\cdots e_{k}v_{1}$, where $k\ge3$. By (2), we have $d_{\mathcal H}(v_i)=2$ for all $i\in[k]$. In particular $d_{\mathcal H}(v_1)=d_{\mathcal H}(v_2)=2>d_{\mathcal H}(v)$ implies $v\not\in\{v_1,v_2\}$, and in turn $ v_1,v_2,v \in e_1$ enforces $e_{1}= \{v_{1},v,v_{2}\}$. Let $\mathcal S$ be a minimum FVS  of $\mathcal H'=\mathcal H\setminus\{e_1,e_2,e_3\}$. It follows from (2) {that
\[\mathcal H\setminus v_3\subseteq  \mathcal H\setminus\{e_2,e_3\}= \mathcal H'\oplus e_1,\]
and} in $\mathcal H'\oplus e_1$, {edge} $e_1$ intersects at most one other edge, and therefore is not contained by any cycle. 
Thus {$\mathcal S$ is a FVS of $ \mathcal H'\oplus e_1$, and hence a FVS of $\mathcal H\setminus v_3$, implying} that $\{v_3\}\cup\mathcal  S$ 
is a FVS of $\mathcal H$. We deduce that $|\mathcal E|/3<\tau_c^{{}_{\mathcal V}}(\mathcal H)\le|\{v_3\}\cup \mathcal S|\le 1+|\mathcal S|$. Therefore $\tau_c^{{}_{\mathcal V}}(\mathcal H')=|\mathcal S|>(|\mathcal E|-3)/3=|\!|\mathcal H'|\!|/3$ shows  a contradiction to the minimality of $\mathcal H$. Hence the vertices of $\mathcal H$ all have degree at least 2, which together with (2) gives

\begin{itemize}
\item[(3)] $\mathcal H$ is $2$-regular.
\end{itemize}

Let $\mathcal C=(\mathcal V_c,\mathcal E_c)=v_1e_1v_2e_2\ldots v_ke_kv_1$ be a shortest cycle in $\mathcal H$, where $k\ge3$. For each $i\in[k]$, suppose that $e_i=\{v_i,u_i,v_{i+1}\}$, where $v_{k+1}=v_1$.    


 Because $\mathcal C$ is a shortest cycle, for each   pair of distinct indices $i,j\in [k]$, we have $e_{i}\cap e_{j}=\emptyset$ if and only if $e_i$ and $e_j$ are not adjacent in $\mathcal C$, i.e., $|i-j|\not\in\{1,k-1\}$. This fact along with the linearity of $\mathcal H$ says that $v_1,v_2,\ldots,v_k,u_1,u_2,\ldots,u_k$ are distinct. By (3), each $u_i$ is contained in a unique edge $f_i\in\mathcal E\setminus\mathcal E_c$, $i\in[k]$. We distinguish among three cases  depending on the values of $k\pmod 3$. In each case, we construct a proper sub-hypergraph $\mathcal H'$ of $\mathcal H$ with $|\!|\mathcal H'|\!|<|\!|\mathcal H|\!|$ and $\tau_c^{{}_{\mathcal V}}(\mathcal H')>|\!|\mathcal H'|\!|/3$ which shows a contradiction to the minimality of $\mathcal H$.

\paragraph{\sc Case 1. $k\equiv0\pmod3$:} Let $\mathcal S$ be a minimum FVS  of $\mathcal H'=\mathcal H\setminus\mathcal E_c$. Setting {$\mathcal V_*=\{v_i:i\equiv0\pmod3, i\in [k]\}$ and $\mathcal E_*=\{e_i:i\equiv1\pmod3, i\in [k]\}$}, it follows from (3) that
\[\mathcal H\setminus \mathcal V_*\subseteq (\mathcal H\setminus\mathcal E_c)\oplus \mathcal E_*= \mathcal H'\oplus \mathcal E_*,\]
and in $\mathcal H'\oplus \mathcal E_*$, each edge in $\mathcal E_*$    intersects exactly one other edge, and therefore is not contained by any cycle. 
Thus $(\mathcal H'\oplus \mathcal E_*)\setminus \mathcal S$ is also acyclic, so is $(\mathcal H\setminus\mathcal V_*)\setminus\mathcal S$, saying that $\mathcal V_*\cup\mathcal  S$ 
is a FVS of $\mathcal H$. We deduce that $|\mathcal E|/3<\tau_c^{{}_{\mathcal V}}(\mathcal H)\le|\mathcal V_*\cup \mathcal S|\le k/3+|\mathcal S|$. Therefore $\tau_c^{{}_{\mathcal V}}(\mathcal H')=|\mathcal S|>(|\mathcal E|-k)/3=|\!|\mathcal H'|\!|/3$ shows  a contradiction.

  \paragraph{\sc Case 2. $k\equiv1\pmod3$:} Consider the case where $f_1\ne f_3$ or $f_2\ne f_4$. Relabeling the vertices and edges if necessary, we may assume
 without loss of generality that $f_1\ne f_3$. Let $\mathcal S$ be a minimum FVS   of $\mathcal H'=\mathcal H\setminus(\mathcal E_c\cup \{f_1,f_3\})$. Set $\mathcal V_*=\emptyset$, $\mathcal E_*=\emptyset$ if $k=4$ and {$\mathcal V_*=\{v_i:i\equiv0\pmod3, i\in [k]-[3]\}$, $\mathcal E_*=\{e_i:i\equiv1\pmod3, i\in [k]-[6]\}$} otherwise. In any case we have $|\mathcal V_*|=(k-4)/3$ and
  \[\mathcal H\setminus(\{u_1,u_3\}\cup\mathcal V_*)\subseteq (\mathcal H\setminus(\mathcal E_c\cup \{f_1,f_3\}))\oplus( \{e_2,e_4\}\cup\mathcal E_*)= \mathcal H'\oplus (\{e_2,e_4\}\cup\mathcal E_*).\]
Note from (3) that in  $\mathcal H'\oplus (\{e_2,e_4\}\cup\mathcal E_*)$, each edge in $ \{e_2,e_4\}\cup\mathcal E_*$ can intersect at most one other edge, and therefore is not contained by any cycle. 
  Thus $( \mathcal H'\oplus (\{e_2,e_4\}\cup\mathcal E_*))\setminus \mathcal S$ is also acyclic, so is $(\mathcal H\setminus (\{u_1,u_3\}\cup\mathcal V_*))\setminus\mathcal  S$. Thus $\{u_1,u_3\}\cup\mathcal V_*\cup\mathcal S$ is a FVS of $\mathcal H$, and $|\mathcal E|/3<\tau_c^{{}_{\mathcal V}}(\mathcal H)\le|\{u_1,u_3 \}\cup\mathcal V_*\cup\mathcal S|\le2+|\mathcal V_*|+|\mathcal S|=(k+2)/3+|\mathcal S|$. This gives $\tau_c^{{}_{\mathcal V}}(\mathcal H')=|\mathcal S|>(|\mathcal E|-k-2)/3=|\mathcal H'|/3$, a contradiction.

Consider the case where $f_1= f_3$ and $f_2=f_4$. As $u_1,u_2,u_3,u_4$ are distinct and $|f_1|=|f_2|=3$, we have $f_1\ne f_2$. Observe that $u_1e_1v_2e_2v_3e_3u_3f_3u_1$ is a cycle in $\mathcal H$ of length 4. The minimality of $k$ enforces $k=4$.  Therefore $\mathcal E_c\cup\{f_1,f_2\}$ consist of 6 distinct edges. Let $\mathcal S$ be a minimum FVS of $\mathcal H'=\mathcal H\setminus(\mathcal E_c\cup\{f_1,f_2\})$. It follows from (3) that \[\mathcal H\setminus \{u_2,u_4\}\subseteq (\mathcal H\setminus (\mathcal E_c\cup\{f_1,f_2\}))\oplus\{e_1,e_3,f_1\}= \mathcal H'\oplus\{e_1,e_3,f_1\}.\]
In $\mathcal H'\oplus\{e_1,e_3,f_1\}$, both $e_{1}$ and $e_{3}$ intersect only one other edge, which is $f_1$, and any cycle through {$f_1$} must contain $e_1$ or $e_3$. It follows that none of $e_1,e_3,f_1$ is contained by a cycle of $\mathcal H'\oplus\{e_1,e_3,f_1\}$.  Thus $(\mathcal H'\oplus \{e_1,e_3,f_1\})\setminus \mathcal S$ is  acyclic, so is $(\mathcal H \setminus \{u_2,u_4\})\setminus \mathcal S$, saying that $ \{u_2,u_4\}\cup\mathcal S$ is a FVS of $\mathcal H$. Hence $|\mathcal E|/3<\tau_c^{{}_{\mathcal V}}(\mathcal H)\le|\{u_2,u_4 \}\cup\mathcal S|\leq2+ |\mathcal S|$. In turn $\tau_c^{{}_{\mathcal V}}(\mathcal H')=|\mathcal S|>(|\mathcal E|-6)/3=|\!|\mathcal H'|\!|/3$ shows a contradiction.

 \paragraph{\sc Case 3. $k\equiv2\pmod3$:} Let $\mathcal S$ be  a minimum FVS of $\mathcal H'=\mathcal H\setminus (\mathcal E_c\cup\{f_1\})$. Setting  {$\mathcal V_*=\{v_i:i\equiv1\pmod3, i\in [k]-[3]\}$ and $\mathcal E_*=\{e_i:i\equiv2\pmod3, i\in [k]\}$}, we have $|\mathcal V_*|=(k-2)/3$ and
  \[\mathcal H\setminus(\{u_1\}\cup\mathcal V_*)\subseteq (\mathcal H\setminus(\mathcal E_c\cup \{f_1\}))\oplus  \mathcal E_*= \mathcal H'\oplus\mathcal E_*\]
  In $\mathcal H'\oplus\mathcal E_*$, each edge in $\mathcal E_*$ intersects at most one other edge, and therefore is not contained by any cycle.   Thus $(\mathcal H'\oplus\mathcal E_*)\setminus \mathcal S$ is  acyclic, so is $(\mathcal H\setminus ( \{u_1\}\cup\mathcal V_*))\setminus\mathcal S$. Hence $\{u_1\}\cup\mathcal V_*\cup\mathcal S$ is a FVS of $\mathcal H$, yielding $|\mathcal E|/3<\tau_c^{{}_{\mathcal V}}(\mathcal H)\le| \{u_1\}\cup\mathcal V_*\cup\mathcal S|\le1+(k-2)/3+|\mathcal S|$ and a contradiction $\tau_c^{{}_{\mathcal V}}(\mathcal H')=|\mathcal S|>(|\mathcal E|-k-1)/3=|\!|\mathcal H'|\!|/3$.

\medskip The combination of the above three cases complete the proof.
  \end{proof}

{We remark that the upper bound $|\!|\mathcal H|\!|/3$ in Theorem \ref{1/3} is best possible. See Figure~\ref{sharp} for illustrations of five 3-uniform linear hypergraphs attaining the upper bound.  It is easy to prove that the maximum degree of every extremal hypergraph (those $\mathcal H$ with $\tau_c^{{}_{\mathcal V}}(\mathcal H)= |\!|\mathcal H|\!|/3$) is at most three. It would be interesting to characterize all extremal hypergraphs for Theorem \ref{1/3}.}

\begin{figure}[sharp]
\begin{center}
\includegraphics[scale=0.65]{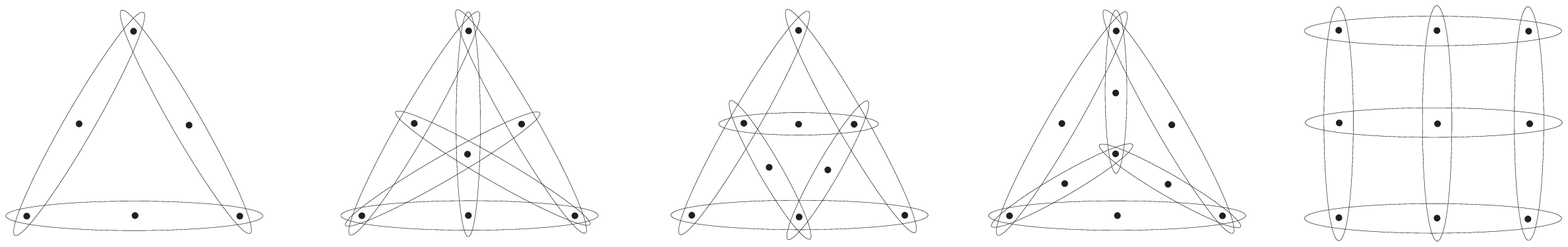}
\caption{\label{sharp}{Extremal linear 3-uniform hypergraphs $\mathcal H$ with $\tau_c^{{}_{\mathcal V}}(\mathcal H)= |\!|\mathcal H|\!|/3$.}}
\end{center}
\end{figure}

\medskip
The  proof of Theorem \ref{1/3} actually gives a recursive combinatorial algorithm for finding in polynomial time a FVS of size at most $|\!|\mathcal H|\!|/3$ on a linear 3-uniform hypergraph $\mathcal H$.
  \begin{algorithm} 
\KwIn{Linear 3-uniform hypergraph $\mathcal H=(\mathcal V,\mathcal E)$.}
\KwOut{$\textsc{Alg1}(\mathcal H)$, which is {a FVS   of  $\mathcal H$ with cardinality at most $|\!|\mathcal H|\!|/3$}.}
\begin{mylabel}
   \vspace{1.5mm}\item[1.]  \textbf{If} $|\mathcal E|\le2$ \textbf{Then} $\textsc{Alg1}(\mathcal H)\leftarrow\emptyset$
     \vspace{0.mm}\item[2.] \quad\textbf{Else} \textbf{If} {$\exists$ $s\in \mathcal V\cup\mathcal E$ such that  $s$ is not contained in any cycle of $\mathcal H$}
  \vspace{0.mm}\item[3.] \quad\quad\quad\hspace{3mm} \textbf{Then} $\textsc{Alg1}(\mathcal H)\leftarrow \textsc{Alg1}(\mathcal H\setminus s)$

  \vspace{0.5mm}\item[4.] \quad\quad\quad\hspace{.7mm}  \textbf{If} $\exists$ $s\in \mathcal V$ {such that} $d_{\mathcal H}(s)\ge3$
  \vspace{0.mm}\item[5.] \quad\quad\quad\hspace{3mm} \textbf{Then} $\textsc{Alg1}(\mathcal H)\leftarrow\{s\}\cup\textsc{Alg1}(\mathcal H\setminus s)$
   \vspace{0.5mm}\item[6.] \quad\quad\quad\hspace{.7mm} \textbf{If} $\exists$ $v\in\mathcal V$ such that $d_{\mathcal H}(v)=1$
  \vspace{0.mm}\item[7.] \quad\quad\quad\hspace{3mm} \textbf{Then} Let $v_1e_1v_2e_2v_3\cdots e_kv_1$ be a cycle of $\mathcal H$ such that $e_1=\{v_1,v_2,v\}$
  \vspace{0.mm}\item[8.] \quad\quad\quad\quad\quad\hspace{5mm} {$\textsc{Alg1}(\mathcal H)\leftarrow\{v_3\}\cup\textsc{Alg1}(\mathcal H\setminus \{e_1,e_2,e_3\})$}
   \vspace{0.mm}\item[9.] \quad\quad\quad\hspace{0.5mm} Let $(\mathcal V_c,\mathcal E_c)=v_1e_1v_2e_2\ldots v_ke_kv_1$ be a shortest cycle in $\mathcal H$
     \vspace{0.mm}\item[10.] \quad\quad\quad\hspace{0.5mm} For each $i\in[k]$, let $u_i\in\mathcal V_c$, $f_i\in\mathcal E\setminus\mathcal E_c$ be such that $\{u_i,v_i,v_{i+1}\}=e_i$, $u_i\in f_i$
     \vspace{0.mm}\item[11.] \quad\quad\quad\hspace{.7mm} \textbf{If} $k\equiv0\pmod3$ \textbf{Then} {$\textsc{Alg1}(\mathcal H)\leftarrow\{v_i: i\equiv0\pmod3, i\in [k]\}\cup\textsc{Alg1}(\mathcal H\setminus \mathcal E_c)$}
  \vspace{0.mm}\item[12.] \quad\quad\quad\hspace{.7mm} \textbf{If} $k\equiv1\pmod3$
 \vspace{0.mm}\item[13.] \quad\quad\quad\hspace{3mm}  \textbf{Then} \textbf{If} $f_1\ne f_3$ or $f_2\ne f_4$
  \vspace{0.mm}\item[14.] \quad\quad\quad\quad\quad\hspace{10mm}\textbf{Then} Relabel vertices and edges if necessary to make $f_1\ne f_3$
   \vspace{0.mm}\item[15.]   \hspace{37mm}  {$\mathcal V_*\leftarrow \{v_i:i\equiv0\pmod3, i\in [k]-[3]\}$}
\vspace{0.mm}\item[16.]   \hspace{37mm}   $\textsc{Alg1}(\mathcal H)\leftarrow\{u_1,u_3\}\cup \mathcal V_*\cup\textsc{Alg1}(\mathcal H\setminus (\mathcal E_c\cup\{f_1,f_3\}))$
  \vspace{0.mm}\item[17.] \quad\quad\quad\quad\quad\hspace{9.5mm} \textbf{Else}  $\textsc{Alg1}(\mathcal H)\leftarrow\{u_2,u_4\}\cup  \textsc{Alg1}(\mathcal H\setminus (\mathcal E_c\cup\{f_1,f_2\}))$
    \vspace{0.mm}\item[18.] \quad\quad\quad\hspace{.7mm}  \textbf{If} $k\equiv2\pmod3$ 
  \vspace{0.mm}\item[19.] \quad\quad\quad\hspace{3mm}  {\textbf{Then}  $\textsc{Alg1}(\mathcal H)\leftarrow\{u_1\}\cup \{v_i:i\equiv1\pmod3, i\in [k]-[3]\}\cup\textsc{Alg1}(\mathcal H\setminus (\mathcal E_c\cup\{f_1\}))$}
  \vspace{0mm}\item[20.] Output $\textsc{Alg1}(\mathcal H)$
\vspace{-3mm}
\end{mylabel}
\caption{{Feedback Vertex Sets of Linear 3-Uniform Hypergraphs}} \label{alg1}
\end{algorithm}

 Note that Algorithm \ref{alg1} never visits isolated vertices {(it only scans along the edges of the current hypergraph)}.  The number of iterations performed by the algorithm is upper bounded by $|\mathcal E|$. {Since $\mathcal H$ is 3-uniform, the} condition in any step is checkable in $O(|\mathcal E|^2)$ time. Any cycle in {Step 7 or Step 9} can be found in $O(|\mathcal E|^2)$ time.\footnote{{The shortest path between any pair of vertices can be find in $O(|\mathcal E|)$ time using breadth first search. A shortest cycle can be find by checking all $O(|\mathcal E|)$ possibilities.}} Thus Algorithm \ref{alg1} runs in {$O(|\mathcal E|^3)$} time.

 \begin{corollary}\label{cor:1/3}
 Given any linear 3-uniform hypergraph $\mathcal H$, Algorithm \ref{alg1} finds in {$O(|\!|\mathcal H|\!|^3)$} time a FVS of $\mathcal H$ with size at most $|\!|\mathcal H|\!|/3$.\qed
 \end{corollary}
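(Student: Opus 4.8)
}
The plan is to read Corollary~\ref{cor:1/3} off the proof of Theorem~\ref{1/3}, which is already a recursive reduction whose cases are exactly the branches of Algorithm~\ref{alg1}, together with the running-time remarks stated just after the algorithm. Two things need to be checked: (a) on every linear $3$-uniform hypergraph $\mathcal H=(\mathcal V,\mathcal E)$, Algorithm~\ref{alg1} terminates and returns a FVS of $\mathcal H$ of cardinality at most $|\!|\mathcal H|\!|/3$, and (b) it runs in $O(|\!|\mathcal H|\!|^3)$ time. For (a) I would induct on $|\!|\mathcal H|\!|=|\mathcal E|$, using that every recursive call is again on a linear $3$-uniform hypergraph (deleting vertices or edges preserves both properties), so the algorithm is well defined, and that each invocation executes the first applicable case and makes at most one recursive call.

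In the base case $|\mathcal E|\le2$, linearity forces every cycle to use at least three distinct edges, so $\mathcal H$ is acyclic and the output $\emptyset$ is a FVS of size $0\le|\mathcal E|/3$. For $|\mathcal E|\ge3$: if some $s\in\mathcal V\cup\mathcal E$ lies on no cycle (Steps~2--3), then every cycle of $\mathcal H$ avoids $s$ and hence survives in $\mathcal H\setminus s$, so any FVS of $\mathcal H\setminus s$ is a FVS of $\mathcal H$; as deleting an edge lowers $|\!|\cdot|\!|$ by one and deleting a non-isolated vertex by its degree (isolated vertices are never selected, per the remark following the algorithm), the inductive hypothesis gives a FVS of size $<|\!|\mathcal H|\!|/3$. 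If some $s\in\mathcal V$ has $d_{\mathcal H}(s)\ge3$ (Steps~4--5), then as in the proof of Theorem~\ref{1/3}, $\{s\}$ together with a FVS of $\mathcal H\setminus s$ is a FVS of $\mathcal H$, of size at most $1+(|\!|\mathcal H|\!|-d_{\mathcal H}(s))/3\le|\!|\mathcal H|\!|/3$ by induction. In each remaining case---a degree-$1$ vertex (Steps~6--8), or a shortest cycle of length $k$ with $k\equiv0,1,2\pmod3$ (Steps~11, 12--17, 18--19)---the proof of Theorem~\ref{1/3} exhibits precisely the vertex set $\mathcal X$ selected by the algorithm and the proper sub-hypergraph $\mathcal H'$ on which it recurses, establishes $|\!|\mathcal H'|\!|=|\!|\mathcal H|\!|-3|\mathcal X|$, and shows---by an argument not relying on minimality---that $\mathcal X\cup\mathcal T$ is a FVS of $\mathcal H$ for every FVS $\mathcal T$ of $\mathcal H'$. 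Applying the inductive hypothesis to $\mathcal H'$ (which has fewer edges since $|\mathcal X|\ge1$) yields a FVS of $\mathcal H'$ of size at most $|\!|\mathcal H'|\!|/3=|\!|\mathcal H|\!|/3-|\mathcal X|$, so the output $\mathcal X\cup\textsc{Alg1}(\mathcal H')$ is a FVS of $\mathcal H$ of size at most $|\mathcal X|+|\!|\mathcal H|\!|/3-|\mathcal X|=|\!|\mathcal H|\!|/3$. Since each recursive call strictly lowers $|\!|\cdot|\!|$, the recursion reaches the base case and the algorithm halts.

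For (b) I would use the timing analysis sketched after Algorithm~\ref{alg1}: the recursion has depth at most $|\mathcal E|$ and each invocation makes at most one recursive call, so it suffices to bound the cost of a single invocation, exclusive of that call, by $O(|\mathcal E|^2)$; this holds because, $\mathcal H$ being $3$-uniform, the degree computations and membership tests in Steps~1, 2, 4 and 6 and the parity tests cost $O(|\mathcal E|^2)$, while a cycle (Step~7) or a shortest cycle (Step~9) can be found in $O(|\mathcal E|^2)$ by breadth-first search from each candidate base vertex or edge. Multiplying gives $O(|\mathcal E|^3)=O(|\!|\mathcal H|\!|^3)$.

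The main work---routine but worth spelling out---is confirming that Algorithm~\ref{alg1} faithfully realizes the case analysis of Theorem~\ref{1/3}: that once Steps~2, 4 and 6 all fail the current hypergraph is $2$-regular with every edge on a cycle, so that the shortest-cycle analysis applies verbatim and the labelled cycle and the vertices $u_i$ and edges $f_i$ of Step~10 exist as described; and that $|\!|\mathcal H'|\!|=|\!|\mathcal H|\!|-3|\mathcal X|$ in each branch---e.g.\ $|\mathcal V_*|=(k-4)/3$ when $k\equiv1\pmod3$ and $|\mathcal V_*|=(k-2)/3$ when $k\equiv2\pmod3$---using the distinctness facts (all $v_i$ and $u_j$ distinct, $f_1,f_3\notin\mathcal E_c$, and in the $k=4$ subcase $\mathcal E_c\cup\{f_1,f_2\}$ being six distinct edges) already recorded in the proof of Theorem~\ref{1/3}.
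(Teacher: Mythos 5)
Your proposal is correct and follows essentially the same route as the paper: the paper justifies the corollary by observing that the minimal-counterexample argument of Theorem~\ref{1/3} unwinds into the recursion of Algorithm~\ref{alg1} (each branch removing a set $\mathcal X$ of vertices and exactly $3|\mathcal X|$ edges, with the FVS property of $\mathcal X\cup\textsc{Alg1}(\mathcal H')$ proved without using minimality), combined with the $O(|\!|\mathcal H|\!|)$ bound on the number of recursive calls and the $O(|\!|\mathcal H|\!|^2)$ per-call cost stated after the algorithm. Your explicit induction on $|\!|\mathcal H|\!|$ and the bookkeeping $|\!|\mathcal H'|\!|=|\!|\mathcal H|\!|-3|\mathcal X|$ is exactly the intended reading, so no changes are needed.
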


\begin{lemma}\label{acycle}
If $\mathcal H=(\mathcal V,\mathcal E)$  is a connected linear $3$-uniform hypergraph without cycles, then $|\mathcal V|=2|\mathcal E|+1$.
\end{lemma}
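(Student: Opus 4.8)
The plan is to prove the identity $|\mathcal V| = 2|\mathcal E| + 1$ by induction on the number of edges $|\mathcal E|$, peeling off a single edge at a time along a leaf-like structure of the hypergraph. The base case $|\mathcal E| = 1$ is immediate: a single $3$-uniform edge with no further structure has $|\mathcal V| = 3 = 2\cdot 1 + 1$, and if $|\mathcal E| = 0$ then connectedness forces a single vertex, giving $1 = 2\cdot 0 + 1$. For the inductive step, the key observation is that since $\mathcal H$ is connected, linear, $3$-uniform and acyclic with $|\mathcal E| \ge 1$, it must contain an edge $e$ that meets the rest of the hypergraph in at most one vertex; equivalently, $e$ has at least two vertices of degree $1$. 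I would establish this by a counting or extremal argument: if every edge shared at least two of its vertices with other edges, one could trace out a cycle (this is exactly where acyclicity is used), contradicting the hypothesis.

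First I would make the leaf-edge claim precise. Consider the ``contact graph'' or, more directly, argue as follows: pick a path $v_1 e_1 v_2 e_2 \cdots$ of maximum length in $\mathcal H$ (such a path exists and is finite since $\mathcal H$ is finite). Looking at the terminal edge $e_k$ of this maximal path, I claim $e_k$ has two vertices not shared with any other edge. Indeed, $e_k = \{v_k, u, w\}$ where $v_k$ is the penultimate path vertex; if either $u$ or $w$ had degree $\ge 2$, we could either extend the path (contradicting maximality) or, if the extension loops back to an earlier path vertex, create a cycle (contradicting acyclicity); linearity is what guarantees that two distinct edges meet in at most one vertex, so the ``loop back'' genuinely produces a cycle in the hypergraph sense defined in the excerpt. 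This pins down that $e_k$ contributes exactly two ``private'' degree-$1$ vertices.

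Next I would perform the deletion. Let $e$ be such a leaf edge with private vertices $x, y$ (degree $1$) and third vertex $z$. Form $\mathcal H' = \mathcal H \setminus e$ in the edge-deletion sense (remove $e$, also discard the now-isolated vertices $x, y$, but keep $z$). Then $\mathcal H'$ is still $3$-uniform and linear; it has $|\mathcal E| - 1$ edges and $|\mathcal V| - 2$ vertices; it remains acyclic (deleting an edge cannot create a cycle); and it remains connected because every path in $\mathcal H$ that used $e$ must have used $e$ as a terminal edge through $x$ or $y$ (since those are the only ways into $e$ other than through $z$, and a path cannot enter and leave $e$ both through $z$). More carefully, any two vertices of $\mathcal H'$ were joined by a path in $\mathcal H$; if that path used $e$, it used $e$ at one end through a degree-$1$ vertex, so truncating gives a path avoiding $e$ between the same pair of vertices in $\mathcal H'$ — unless one of the endpoints was $x$ or $y$, but those are not in $\mathcal H'$. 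Applying the induction hypothesis to $\mathcal H'$ gives $|\mathcal V| - 2 = 2(|\mathcal E| - 1) + 1$, i.e. $|\mathcal V| = 2|\mathcal E| + 1$, completing the induction.

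The main obstacle I anticipate is the connectedness-preservation part of the deletion step, and relatedly the clean justification of the leaf-edge existence: one must be careful that when a maximal path's terminal edge $e_k$ is ``extended'' through a purported degree-$2$ vertex, the resulting walk either is a strictly longer path (using distinctness of vertices and edges) or closes up into a genuine cycle of length $\ge 3$ (using linearity to rule out length-$2$ closures), and that no edge-case — such as the path having length $1$, or $e_k$ meeting the path at two vertices — slips through. I would handle this by treating the maximal path carefully and invoking linearity explicitly at each branch. An alternative, possibly cleaner route avoiding the maximal-path bookkeeping is a global degree count: in a connected acyclic linear $3$-uniform hypergraph the number of ``incidences'' is $3|\mathcal E|$, and an acyclicity-forced inequality on the number of vertices of degree $\ge 2$ (reminiscent of a tree's $|V| = |E| + 1$, here inflated by the uniform edge size) yields the formula directly; but the inductive peeling argument is the most transparent and is the one I would write up.
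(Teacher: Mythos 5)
Your proof is correct, but it takes a different route from the paper's. The paper also inducts on $|\mathcal E|$, but it deletes an \emph{arbitrary} edge $e$: connectivity, acyclicity and $3$-uniformity force $\mathcal H\setminus e$ to split into exactly three components (if two vertices of $e$ stayed connected, a path between them together with $e$ would close into a cycle), each of which is again connected, linear, $3$-uniform and acyclic; applying the induction hypothesis to each and summing gives $|\mathcal V|=\sum_{i=1}^3(2|\mathcal E_i|+1)=2|\mathcal E|+1$. You instead locate a \emph{pendant} edge with two degree-$1$ vertices via a longest-path argument, delete it together with its two private vertices, and apply the induction hypothesis once after checking that connectivity survives. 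Both arguments are sound; the paper's buys brevity, since deleting an arbitrary edge needs no existence lemma and no connectivity-preservation step (the ``exactly three components'' observation absorbs all of that), whereas yours keeps the hypergraph in one piece at the cost of the leaf-edge lemma and the bookkeeping you yourself flag (ruling out the terminal edge meeting the path twice, length-$2$ closures via linearity, and the degree-$1$ endpoints argument for connectivity). One cosmetic remark: in your connectivity step it is cleaner to say that a path between two vertices of $\mathcal H'$ can never traverse $e$ at all, since traversing $e$ forces one of $x,y$ to be an internal path vertex, which is impossible for a degree-$1$ vertex; the ``truncation'' phrasing is slightly misleading but the underlying reasoning is the same.
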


\begin{proof}
We prove by induction on $|\mathcal E|$. The base case where $|\mathcal E|=0$ is trivial. Inductively, we assume that {$|\mathcal E|\ge1$} and the
lemma holds for all connected  acyclic linear 3-uniform hypergraph of edges fewer than $\mathcal H$. Take arbitrary $e\in \mathcal E$. Since $\mathcal H$ is connected, acyclic and 3-uniform, $\mathcal H\setminus e$ contains exactly three components $\mathcal H_{i}=(\mathcal V_{i},\mathcal E_{i})$, $i=1,2,3$. Note that for each $i\in[3]$, hypergraph  $\mathcal H_i$ with $|\mathcal E_i|<|\mathcal E|$ is connected, linear, 3-uniform and acyclic. By the induction hypothesis, we have $|\mathcal V_{i}|=2|\mathcal E_{i}|+1$ for $i=1,2,3$. It follows that $|\mathcal V|=\sum_{i=1}^3|\mathcal V_{i}|= 2\sum_{i=1}^3|\mathcal E_{i}| +3= 2|\mathcal E|+1$.
\end{proof}

Given any hypergraph $\mathcal H=(\mathcal V,\mathcal E)$, we  can easily find a minimal (not necessarily minimum) FES in $O(|\mathcal E|^2)$ time: Go through the edges of the trivial FES $\mathcal E$ in any order, and remove the edge from  the FES immediately if the edge is redundant. The redundancy test can be implemented using Depth First Search.

\begin{lemma}\label{cyclecover}
Let $\mathcal H=(\mathcal V,\mathcal E)$ be a linear $3$-uniform hypergraph with $p$ components. If $\mathcal F$ is a minimal FES of $\mathcal H$, then $|\mathcal F|\leq 2|\mathcal E|-|\mathcal V|+p$. In particular, $\tau_c^{{}_{\mathcal E}}(\mathcal H)\le 2|\mathcal E|-|\mathcal V|+p$.
\end{lemma}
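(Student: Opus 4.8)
The plan is to proceed by induction on $|\mathcal E|$, peeling off one edge from a minimal FES $\mathcal F$ at a time while tracking how the relevant quantities change. First I would dispose of the trivial case: if $\mathcal H$ is already acyclic, then $\mathcal F=\emptyset$ is the only minimal FES, and Lemma~\ref{acycle} applied componentwise gives $|\mathcal V|=2|\mathcal E|+p$ (summing $|\mathcal V_i|=2|\mathcal E_i|+1$ over the $p$ components), so $2|\mathcal E|-|\mathcal V|+p=0=|\mathcal F|$, as desired. This also suggests the right way to think about the inequality: the quantity $\beta(\mathcal H):=2|\mathcal E|-|\mathcal V|+p$ is a natural ``cycle rank''-type parameter for linear $3$-uniform hypergraphs, and the claim is that a minimal FES is never bigger than it.

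For the inductive step, suppose $\mathcal F\ne\emptyset$ and pick any $e\in\mathcal F$. By minimality of $\mathcal F$, the set $\mathcal F\setminus e$ is \emph{not} a FES of $\mathcal H$, i.e. there is a cycle of $\mathcal H$ avoiding all edges of $\mathcal F\setminus e$; such a cycle must use $e$. Now consider $\mathcal H':=\mathcal H\setminus e$ (deleting the edge, keeping its three vertices). Then $\mathcal F\setminus e$ is a minimal FES of $\mathcal H'$: it is a FES of $\mathcal H'$ since every cycle of $\mathcal H'$ is a cycle of $\mathcal H$ not using $e$, hence met by $\mathcal F\setminus e$; and it is minimal because for any $f\in\mathcal F\setminus e$, minimality of $\mathcal F$ in $\mathcal H$ gives a cycle through $f$ avoiding $\mathcal F\setminus f$, and since $e\in\mathcal F$ this cycle avoids $e$ too, so it lives in $\mathcal H'$ and witnesses that $(\mathcal F\setminus e)\setminus f$ is not a FES of $\mathcal H'$. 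By induction, $|\mathcal F|-1=|\mathcal F\setminus e|\le\beta(\mathcal H')$, so it remains to show $\beta(\mathcal H')\le\beta(\mathcal H)-1$, i.e. that deleting the edge $e$ drops the parameter by at least one.

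The heart of the argument is therefore the bookkeeping for $\beta$ under deletion of a single edge $e=\{x,y,z\}$. We have $|\mathcal E'|=|\mathcal E|-1$ and $|\mathcal V'|=|\mathcal V|$, so $\beta(\mathcal H')=\beta(\mathcal H)-2+(p'-p)$ where $p'$ is the number of components of $\mathcal H'$. Thus I need $p'\le p+1$, i.e. removing one edge from a linear $3$-uniform hypergraph increases the number of components by at most one. This is where linearity and the fact that $e$ lies on a cycle get used: the three vertices $x,y,z$ of $e$ together with the rest of the edges — and crucially the cycle through $e$ found above, whose other edges survive in $\mathcal H'$ and connect (at least) two of $x,y,z$ through a path in $\mathcal H'$ — keep $x,y,z$ from splitting into three separate pieces. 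Concretely, the cycle through $e$ enters and leaves $e$ at two of its vertices, say $x$ and $y$, via a path of edges all distinct from $e$; that path lies in $\mathcal H'$ and keeps $x,y$ in the same component of $\mathcal H'$. Hence within the component of $\mathcal H$ containing $e$, after deletion the vertices $x,y,z$ spread over at most two components of $\mathcal H'$ (the $\{x,y\}$-part and possibly a separate $z$-part), so that one component of $\mathcal H$ becomes at most two components of $\mathcal H'$, giving $p'\le p+1$ and $\beta(\mathcal H')\le\beta(\mathcal H)-1$.

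The main obstacle I anticipate is precisely this component-count estimate — being careful that $\mathcal H'$ is formed by edge-deletion (vertices retained), so no vertex becomes isolated spuriously, and handling the possibility that $z$ is actually connected back to $x,y$ by other edges (in which case $p'=p$ and we are even better off). Once $p'\le p+1$ is nailed down the rest is a one-line substitution, and the ``in particular'' clause about $\tau_c^{{}_{\mathcal E}}(\mathcal H)$ follows since a minimum FES is in particular minimal. One should double-check the edge cases $|\mathcal E|\le 2$ and components that are single edges or single vertices, but these are absorbed by the acyclic base case and the induction. A mild subtlety worth a sentence in the write-up: the cycle guaranteed by minimality of $\mathcal F$ might pass through $e$ using any two of its three vertices, so one names them $x,y$ generically; linearity ensures the cycle meets $e$ in exactly those two vertices and not in a degenerate way.
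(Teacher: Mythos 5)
Your proof is correct and rests on essentially the same two ingredients as the paper's: the acyclic component count $|\mathcal V_i|=2|\mathcal E_i|+1$ from Lemma~\ref{acycle}, and the observation that minimality of $\mathcal F$ provides, for each $e\in\mathcal F$, a cycle through $e$ whose remaining path keeps two of the three vertices of $e$ connected after deletion, so removing $e$ creates at most one new component. The paper merely packages this as a single count on $\mathcal H\setminus\mathcal F$ (namely $2|\mathcal F|=2|\mathcal E|-|\mathcal V|+k$ together with $k\le p+|\mathcal F|$ via sequential edge removal) instead of your induction on $|\mathcal E|$, but the argument is the same.
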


\begin{proof}  Suppose that $\mathcal H\setminus \mathcal F$ contains exactly $k$ components $\mathcal H_{i}=(\mathcal V_{i},\mathcal E_{i})$, $i=1,\ldots,k$. It follows from Lemma~\ref{acycle} that $|\mathcal V_{i}|=2|\mathcal E_{i}|+1$ for each $i\in [k]$. Thus $|\mathcal V|=\sum_{i\in [k]}|\mathcal V_{i}|= 2\sum_{i\in [k]}|\mathcal E_{i}|+k= 2(|\mathcal E|- |\mathcal F|)+k$, which means $2|\mathcal F|= 2|\mathcal E|-|\mathcal V|+k$. To establish the lemma, it suffices to prove  $k\leq |\mathcal F|+ p$.

 In case of  $|\mathcal F|=0$, we have $\mathcal F=\emptyset$ and $k=p=|\mathcal F|+p$.  In case of  $|\mathcal F| \ge1$, suppose that $\mathcal F=\{e_{1},...,e_{|\mathcal F|}\}$. Because $\mathcal F$ is a minimal FES of $\mathcal H$,  for each $i\in [|\mathcal F|]$, there is a cycle $\mathcal C_{i}$ in $\mathcal H\setminus (\mathcal F\setminus \{e_{i}\})$ such that $e_{i}\in \mathcal C_{i}$, and $\mathcal C_i\setminus e_i$ is a path in $\mathcal H\setminus \mathcal F$ connecting two of the three vertices in $e_i$. Considering $\mathcal H\setminus\mathcal F$ being obtained from $\mathcal H$ be removing $e_1,e_2,\ldots,e_{|\mathcal F|}$ sequentially, for $i=1,\ldots,|\mathcal F|$, since $|e_i|=3$, the presence of path $\mathcal C_i\setminus e_i$ implies that the removal of $e_i$ can create at most one more component. Therefore we have  $k\le p+|\mathcal F|$ as desired.
\end{proof}

Given a {hypergraph} $\mathcal H=(\mathcal V,\mathcal E)$ with $n$ vertices and $m$ edges, {let $M_{\mathcal H}$ be the $\mathcal V\times\mathcal E$ incidence matrix}. From $M_{\mathcal H}$, we may construct a bipartite graph $G_{\mathcal H}$ with bipartition $\mathcal V,\mathcal E$ such that there is an edge of $G_{\mathcal H}$ between $v\in\mathcal V$ and $e\in\mathcal E$ if and only if $v\in e$ in $\mathcal H$.

 Suppose that $\mathcal H$ is acyclic. It is easy to see that $G_{\mathcal H}$ is acyclic. Thus $M=M_{\mathcal H}$ falls within the class of {\em restricted totally unimodular} (RTUM) matrices  defined by Yannakakis \cite{yannakakis1985}. As the name indicates, RTUM matrices are all totally unimodular. Hence the total unimodularity and LP duality give the well-known result \cite{berge1989} that {$\tau(\mathcal H)=\min\{\mathbf 1^T\mathbf x:M^T\mathbf x\ge\mathbf 1,x\geq 0\}=\max\{\mathbf 1^T\mathbf y:M\mathbf y\le\mathbf 1,y\geq 0\}=\nu(\mathcal H)$}. Moreover, since $M$ is RTUM, both a minimum transversal and a maximum matching of $\mathcal H$ can be found in  $O(n(m+n\log n)\log n)$ time using Yanakakis's combinatorial algorithm \cite{yannakakis1985} based on the current best combinatorial algorithms for the $b$-matching problem and the maximum weighted independent set problem on a bipartite mulitgraph with $n$ vertices and $m$ edges, where the bipartite $b$-matching problem can be solved with the minimum-cost flow algorithm in $O(n\log n(m+n\log n))$ time (see Section 21.5 and Page 356 of \cite{schrijver2003}) and the maximum weighted independent set problem can be solved with maximum flow algorithm in $O(nm\log n)$ time (See Pages 300-301 of \cite{yannakakis1985}).

\begin{theorem}[\cite{berge1989,{yannakakis1985}}]\label{cyclefree}
Let $\mathcal H$ be a hypergraph with $n$ {non-isolated} vertices and $m$ edges. If $\mathcal H$ has no cycle, then $\tau(\mathcal H)=\nu(\mathcal H)$, and a minimum transversal and a maximum matching of $\mathcal H$ can be found in $O(n(m+n\log n)\log n)$ time.\qed
\end{theorem}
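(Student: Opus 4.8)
The plan is to deduce the min--max equality from the total unimodularity of the vertex--edge incidence matrix of an acyclic hypergraph, and to obtain the running-time bound by invoking Yannakakis's optimization procedure for restricted totally unimodular (RTUM) matrices. As a harmless first step I would delete the isolated vertices of $\mathcal H$, since they affect neither $\tau(\mathcal H)$ nor $\nu(\mathcal H)$ (and the claimed bound is stated in terms of the number $n$ of non-isolated vertices). Next I would pass to the bipartite incidence graph $G_{\mathcal H}$ with colour classes $\mathcal V$ and $\mathcal E$ and an edge $ve$ exactly when $v\in e$; its adjacency is recorded by $M_{\mathcal H}$. The key elementary observation is that an ordinary (even) cycle $v_1e_1v_2e_2\cdots v_ke_kv_1$ of $G_{\mathcal H}$ is literally a cycle of $\mathcal H$ in the vertex--edge alternating sense defined in Section~\ref{sec:hypergraph}, and conversely; hence $\mathcal H$ is acyclic if and only if $G_{\mathcal H}$ is a forest.

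Granting that $G_{\mathcal H}$ is a forest, the matrix $M:=M_{\mathcal H}$ is RTUM in the sense of Yannakakis~\cite{yannakakis1985}, hence totally unimodular; so is $M^{T}$. (Alternatively one can argue total unimodularity directly: the incidence matrix of a forest-structured bipartite graph is a network matrix, or one may induct on the number of edges of $G_{\mathcal H}$ using a leaf and the fact that deleting a row or column with a single $\pm1$ entry preserves total unimodularity.) Then I would look at the dual pair of linear programs
\[
\min\{\mathbf 1^{T}\mathbf x : M^{T}\mathbf x\ge \mathbf 1,\ \mathbf x\ge \mathbf 0\}\quad\text{and}\quad \max\{\mathbf 1^{T}\mathbf y : M\mathbf y\le \mathbf 1,\ \mathbf y\ge \mathbf 0\}.
\]
Strong LP duality equates the two optimal values, and total unimodularity of $M$ together with the integral right-hand sides forces each program to attain its optimum at an integral vector, which (being bounded between $\mathbf 0$ and $\mathbf 1$) is a $0/1$ vector. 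A $0/1$ optimal $\mathbf x$ is the indicator of a minimum transversal of $\mathcal H$ and a $0/1$ optimal $\mathbf y$ is the indicator of a maximum matching of $\mathcal H$; therefore $\tau(\mathcal H)=\nu(\mathcal H)$.

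For the algorithmic part I would simply quote Yannakakis's algorithm~\cite{yannakakis1985}, which optimizes a linear objective over the polyhedron associated with an RTUM matrix by exploiting the forest structure to reduce to (i) a bipartite $b$-matching problem, solvable by min-cost flow in $O(n\log n\,(m+n\log n))$ time, and (ii) a maximum-weight independent set problem on a bipartite (multi)graph, solvable by max flow in $O(nm\log n)$ time. Both are dominated by $O(n(m+n\log n)\log n)$, giving the stated bound and simultaneously producing the optimal transversal and matching.

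The only real content, and the step I would be most careful about, is the structural identification in the first paragraph together with the total unimodularity claim: once it is verified that the paper's notion of a hypergraph cycle coincides with an ordinary cycle of $G_{\mathcal H}$, acyclicity of $\mathcal H$ genuinely makes $M_{\mathcal H}$ an RTUM (hence totally unimodular) matrix, and everything afterwards is a textbook application of LP duality and of the cited combinatorial flow algorithms. I would also mention, as a fully self-contained alternative to the total-unimodularity route for the equality $\tau(\mathcal H)=\nu(\mathcal H)$, an induction on $|\mathcal E|$: in a rooted component of the forest $G_{\mathcal H}$ pick a deepest edge $e$, so that all vertices of $e$ except its parent vertex $v_0$ have degree $1$ in $\mathcal H$; then $\tau(\mathcal H)\le 1+\tau(\mathcal H\setminus v_0)=1+\nu(\mathcal H\setminus v_0)\le\nu(\mathcal H)$, using that a maximum matching of $\mathcal H\setminus v_0$ avoids $e$ and can be augmented by $e$, while $\nu(\mathcal H)\ge\tau(\mathcal H)$ is trivial.
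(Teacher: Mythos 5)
Your proposal is correct and follows essentially the same route as the paper: identify hypergraph cycles with cycles of the bipartite incidence graph $G_{\mathcal H}$, conclude that $M_{\mathcal H}$ is restricted totally unimodular in Yannakakis's sense and hence totally unimodular, obtain $\tau(\mathcal H)=\nu(\mathcal H)$ from LP duality with integrality, and quote Yannakakis's combinatorial algorithm (via bipartite $b$-matching by min-cost flow and maximum-weight independent set by max flow) for the $O(n(m+n\log n)\log n)$ bound. The extra self-contained induction you sketch for the equality is a pleasant bonus but not needed; the argument as given matches the paper's.
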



 \section{Triangle packing and covering}\label{sec:tuza}

{This section establish several new sufficient conditions for Conjecture \ref{coj:tuza} as well as their algorithmic implications on finding minimum triangle covers. Section \ref{sec:high} deals with graphs of high triangle packing numbers. Section \ref{sec:irr} investigates irreducible graphs with many edges. Section \ref{sec:er} discusses Erd\H{o}s-R\'{e}nyi graphs with high densities.}

To each  graph $G=(V,E)$,  we associate a hypergraph $\mathcal H_G=(E,\mathscr T_G)$,  referred to as {\em triangle hypergraph} of $G$, such that the  vertices and edges of $\mathcal H_G$  are the edges and    triangles of $G$, respectively. Since $G$ is simple, it is easy to see that  $\mathcal H_G $  is $3$-uniform and linear, $ \nu(\mathcal H_G)=\nu_{t}(G)$ and $\tau(\mathcal H_G)=\tau_{t}(G)$. Note that $|\!|\mathcal H_G|\!|=|\mathscr T_G|<\min\{|V|^3,|E|^3\}$, and {$|E|\le3|\mathscr T_G|$} if $G$ is irreducible{, i.e., $\cup_{T\in\mathscr T_G}E(T)=E$. Note that  the number of non-isolated vertices of $\mathcal H_G$ is upper bounded by $3|\!|\mathcal H_G|\!|=3|\mathscr T_G|$}.

\subsection{Graphs with many edge-disjoint triangles}\label{sec:high}
We investigate Tuza's conjecture for graphs with large packing numbers, which are firstly  compared  with the number of triangles, and then with the number of edges.
 \begin{theorem}\label{c}
If  graph  $G$ and real number $c\in (0,1]$ satisfy $\nu_{t}(G)/|\mathscr T_G|\ge c$, then a triangle cover of $G$ with size at most $ \frac{3c+1}{3c}\nu_t(G)$ can be found in {$O(|\mathscr T_G|^3)$} time, which implies $\tau_t(G)/\nu_t(G)\le\frac{3c+1}{3c}$.
 \end{theorem}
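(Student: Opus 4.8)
The plan is to reduce Theorem \ref{c} to the feedback-vertex-set result of Theorem \ref{1/3} applied to the triangle hypergraph $\mathcal H_G=(E,\mathscr T_G)$. Recall from the discussion preceding this theorem that $\mathcal H_G$ is linear and $3$-uniform, that $\nu(\mathcal H_G)=\nu_t(G)$, $\tau(\mathcal H_G)=\tau_t(G)$, and $|\!|\mathcal H_G|\!|=|\mathscr T_G|$. Since a feedback vertex set of $\mathcal H_G$ is exactly a set of edges of $G$ whose removal destroys every cycle of the triangle hypergraph, deleting such a set from $G$ yields a graph whose triangle hypergraph is acyclic.

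First I would run Algorithm \ref{alg1} on $\mathcal H_G$. By Corollary \ref{cor:1/3} this produces, in $O(|\mathscr T_G|^3)$ time, a feedback vertex set $\mathcal S\subseteq E$ of $\mathcal H_G$ with $|\mathcal S|\le|\!|\mathcal H_G|\!|/3=|\mathscr T_G|/3$. Let $G'$ be the graph obtained from $G$ by deleting the edges in $\mathcal S$. Then the triangle hypergraph $\mathcal H_{G'}$ is a sub-hypergraph of $\mathcal H_G\setminus\mathcal S$ (deleting an edge of $G$ can only destroy triangles, not create them), hence acyclic, linear and $3$-uniform. By Theorem \ref{cyclefree} applied to $\mathcal H_{G'}$ we have $\tau_t(G')=\tau(\mathcal H_{G'})=\nu(\mathcal H_{G'})=\nu_t(G')$, and a minimum triangle cover $\mathcal C'$ of $G'$ (equivalently, a minimum transversal of $\mathcal H_{G'}$) can be found in polynomial time. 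Now $\mathcal S\cup\mathcal C'$ is a triangle cover of $G$: every triangle of $G$ either meets $\mathcal S$ or survives into $G'$ and is therefore hit by $\mathcal C'$.

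It remains to bound $|\mathcal S\cup\mathcal C'|$. We have $|\mathcal C'|=\nu_t(G')\le\nu_t(G)$ since $G'\subseteq G$. The bound on $|\mathcal S|$ must be expressed in terms of $\nu_t(G)$ rather than $|\mathscr T_G|$; this is precisely where the hypothesis $\nu_t(G)/|\mathscr T_G|\ge c$ enters, giving $|\mathcal S|\le|\mathscr T_G|/3\le\nu_t(G)/(3c)$. Therefore
\[
\tau_t(G)\le|\mathcal S|+|\mathcal C'|\le\frac{\nu_t(G)}{3c}+\nu_t(G)=\frac{3c+1}{3c}\,\nu_t(G),
\]
which is the claimed bound, and the construction is carried out within the stated $O(|\mathscr T_G|^3)$ time (the FVS computation dominates, the transversal computation on the acyclic hypergraph being polynomial in $|E|+|\mathscr T_G|$). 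The only mild subtlety — the ``main obstacle'' such as it is — is making sure that passing from $G$ to $G'$ really does make $\mathcal H_{G'}$ acyclic: one needs that deleting vertices of $\mathcal H_G$ (i.e.\ edges of $G$) does not introduce new triangles and that $\mathcal H_{G'}\subseteq\mathcal H_G\setminus\mathcal S$, so that Theorem \ref{cyclefree} is applicable; everything else is bookkeeping with the inequalities and the running-time estimates.
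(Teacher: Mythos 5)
Your proposal is correct and follows essentially the same route as the paper: apply Corollary \ref{cor:1/3} (Algorithm \ref{alg1}) to the linear $3$-uniform triangle hypergraph $\mathcal H_G$ to get a feedback vertex set $\mathcal S$ of size at most $|\mathscr T_G|/3\le\nu_t(G)/(3c)$, then use Theorem \ref{cyclefree} on the acyclic remainder (your $\mathcal H_{G'}$ coincides with the paper's $\mathcal H_G\setminus\mathcal S$) to obtain a minimum transversal, and combine the two parts with the same size and running-time bounds.
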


\begin{proof} 
{We}  consider the triangle hypergraph $\mathcal H_G=(E,\mathscr T_G)$ of $G$  {which} is 3-uniform and linear. By Corollary~\ref{cor:1/3}, we can  {find} in $O(|\mathscr T_G|^3)$ time a FVS $\mathcal S$ of $\mathcal H_G$ with $|\mathcal S|\le |\mathscr T_G|/3$.
  Since $\nu(\mathcal H_G)=\nu_{t}(G)\ge c|\mathscr T_G|$, it follows that $|\mathcal S|\le  \nu(\mathcal H_G)/(3c)$. As $\mathcal H_G\setminus S$ is acyclic, Theorem~\ref{cyclefree} enables us to find in $O(|\mathscr T_G|^2\log^2|\mathscr T_G|)$ time a minimum transversal $\mathcal R$ of $\mathcal H_G\setminus S$ such that   $|\mathcal R|=\tau(\mathcal H_G\setminus S)=\nu(\mathcal H_G\setminus \mathcal S)$. We observe that $\mathcal S\cup\mathcal R\subseteq E$ and $G\setminus (\mathcal S\cup\mathcal R)$ is triangle-free. Hence $\mathcal S\cup\mathcal R$ is a triangle cover of $G$ with size
  \[|\mathcal S\cup\mathcal R|
  \leq \frac{\nu(\mathcal H_G)}{3c}+\nu(\mathcal H_G\setminus\mathcal S)   \leq\frac{3c+1}{3c}\nu(\mathcal H_G)=\frac{3c+1}{3c}\nu_t(G),\] which proves the theorem.
\end{proof}

The special case of $c=1/3$ in the above theorem {gives the following result providing a new sufficient condition for Tuza's conjecture.}
 \begin{corollary}\label{th:1/3}If graph $G$   satisfies $\nu_{t}(G)/|\mathscr T_G|\ge1/3$, then $\tau_t(G)/\nu_t(G)\le2$.\qed
 \end{corollary}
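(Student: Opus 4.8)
The plan is to obtain this corollary as the immediate specialization of Theorem~\ref{c} to the endpoint $c=\tfrac13$. First I would check that $c=\tfrac13$ lies in the admissible range $(0,1]$ and that the hypothesis $\nu_t(G)/|\mathscr T_G|\ge\tfrac13$ is exactly the hypothesis of Theorem~\ref{c} for this value of $c$. Then that theorem produces a triangle cover of $G$ of cardinality at most $\frac{3c+1}{3c}\nu_t(G)=2\nu_t(G)$, whence $\tau_t(G)\le 2\nu_t(G)$, i.e. $\tau_t(G)/\nu_t(G)\le2$.

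For completeness I would also recall in one line why Theorem~\ref{c} delivers the bound $2\nu_t(G)$ here: pass to the triangle hypergraph $\mathcal H_G=(E,\mathscr T_G)$, which is linear and $3$-uniform; by Corollary~\ref{cor:1/3} it has a feedback vertex set $\mathcal S\subseteq E$ with $|\mathcal S|\le|\mathscr T_G|/3\le\nu_t(G)$, the last inequality being precisely the hypothesis; since $\mathcal H_G\setminus\mathcal S$ is acyclic, Theorem~\ref{cyclefree} yields a transversal $\mathcal R$ of it with $|\mathcal R|=\nu(\mathcal H_G\setminus\mathcal S)\le\nu(\mathcal H_G)=\nu_t(G)$; and $\mathcal S\cup\mathcal R$ meets every triangle of $G$, because a triangle either already contains an edge of $\mathcal S$ or survives as an edge of $\mathcal H_G\setminus\mathcal S$ and is hit by $\mathcal R$. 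Summing gives $|\mathcal S\cup\mathcal R|\le 2\nu_t(G)$.

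There is essentially no obstacle at this level: all the difficulty has been front-loaded into Theorem~\ref{1/3} (the bound $\tau_c^{{}_{\mathcal V}}(\mathcal H)\le|\!|\mathcal H|\!|/3$ for linear $3$-uniform hypergraphs, established by the minimal-counterexample and shortest-cycle case analysis) and into the K\"onig-type identity for acyclic hypergraphs (Theorem~\ref{cyclefree}). The only point requiring slight care is the bookkeeping that $\tfrac13$ is the exact threshold making $|\mathscr T_G|/3\le\nu_t(G)$, so that deleting the feedback vertex set costs no more than $\nu_t(G)$ edges and the two halves of the cover each contribute at most $\nu_t(G)$; any smaller constant in the hypothesis would upset this balance, which is precisely why $c=\tfrac13$ is the critical value.
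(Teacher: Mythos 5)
Your proposal is correct and matches the paper exactly: the corollary is obtained as the specialization $c=\tfrac13$ of Theorem~\ref{c}, where $\frac{3c+1}{3c}=2$, and your one-line recap of why Theorem~\ref{c} works (feedback vertex set of the triangle hypergraph of size at most $|\mathscr T_G|/3\le\nu_t(G)$ via Corollary~\ref{cor:1/3}, plus a minimum transversal of the acyclic remainder via Theorem~\ref{cyclefree}) is precisely the paper's argument. No gaps.
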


The condition  $\nu_{t}(G)\ge|\mathscr T_G|/3$ in Corollary~\ref{th:1/3} applies, in some sense, only to the class of large scale  sparse graphs (which, e.g., does not include complete graphs on   four or more vertices). The mapping from the real number $c$ in the condition $\nu_{t}(G)\ge c|\mathscr T_G|$ to the coefficient $ \frac{3c+1}{3c}$ in  the conclusion  $\tau_t(G)\le \frac{3c+1}{3c}\nu_t(G)$ of Theorem \ref{c} shows  the trade-off between conditions and conclusions. As in {Corollary}~\ref{th:1/3}, $c = \frac13$  maps to  $ \frac{3c+1}{3c} = 2$ hitting the boundary  of   Tuza's conjecture. It remains to study graphs $G$ with $\nu_t(G)/|\mathscr T_G|<\frac13$. The next theorem (Theorem \ref{th:a-fourth}) tells us that actually we only need to take care of graphs $G$ with   $\nu_t(G)/|\mathscr T_G|\in(\frac14-\epsilon,\frac13)$, where $\epsilon$ can be any arbitrarily small positive number. So, in some sense, to solve Tuza's conjecture, we only have a gap of $\frac13-\frac14=\frac1{12}$ to be bridged. Interestingly, for $c = \frac14$, we have {$\frac{3c+1}{3c} = \frac73 = 2.333...$}, which is much better than the best known general bound 2.87 due to Haxell \cite{haxell19}. Only when  $c \leq \frac16$ does $\frac{3c+1}{3c}$ state a trivial bound equal to or greater than 3.

 \begin{theorem}\label{th:a-fourth}
If there exists some real $\delta>0$ such that Conjecture \ref{coj:tuza} holds for every graph $G$ with $\nu_t(G)/|\mathscr T_G|\ge1/4-\delta$, then Conjecture \ref{coj:tuza} holds for every graph.
 \end{theorem}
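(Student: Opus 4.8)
The plan is to argue by contradiction: assume Conjecture \ref{coj:tuza} holds for all graphs with $\nu_t(G)/|\mathscr T_G|\ge\frac14-\delta$ for some fixed $\delta>0$, but fails for some graph. Among all counterexamples pick one, say $G_0$, and we want to build from it a new counterexample $G_1$ whose ratio $\nu_t(G_1)/|\mathscr T_{G_1}|$ is at least $\frac14-\delta$, contradicting the hypothesis. The natural construction is to take disjoint copies of $G_0$ together with copies of some fixed ``filler'' graph $F$ that has a large triangle-packing-to-triangle ratio (for instance $F=K_4$, which has $\nu_t=1$ and $|\mathscr T_F|=4$, giving ratio exactly $\frac14$; or an edge-disjoint union of $K_4$'s, which keeps the ratio at $\frac14$). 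Because $\nu_t$, $|\mathscr T|$, and $\tau_t$ are all additive over connected components (disjoint unions), taking $a$ copies of $G_0$ and $b$ copies of $F$ yields a graph $G_1$ with $\nu_t(G_1)=a\,\nu_t(G_0)+b\,\nu_t(F)$, $|\mathscr T_{G_1}|=a\,|\mathscr T_{G_0}|+b\,|\mathscr T_F|$, and $\tau_t(G_1)=a\,\tau_t(G_0)+b\,\tau_t(F)$.

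First I would record the additivity facts above (they are immediate since triangles live inside a single component). Next I would choose $F$ so that $\nu_t(F)/|\mathscr T_F|$ is as close to $\frac14$ from above as the argument needs — concretely $F=K_4$ gives ratio exactly $\tfrac14>\tfrac14-\delta$. Then, for the combined graph $G_1=a\,G_0\sqcup b\,F$, the ratio is
\[
\frac{\nu_t(G_1)}{|\mathscr T_{G_1}|}=\frac{a\,\nu_t(G_0)+b\,\nu_t(F)}{a\,|\mathscr T_{G_0}|+b\,|\mathscr T_F|},
\]
and as $b/a\to\infty$ this tends to $\nu_t(F)/|\mathscr T_F|=\tfrac14>\tfrac14-\delta$. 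Hence for all sufficiently large $b$ (with $a=1$, say) we have $\nu_t(G_1)/|\mathscr T_{G_1}|\ge\tfrac14-\delta$, so by hypothesis $\tau_t(G_1)\le2\nu_t(G_1)$. On the other hand, from additivity and the assumed strict violation $\tau_t(G_0)>2\nu_t(G_0)$ together with $\tau_t(F)=2\nu_t(F)$ (true for $K_4$: $\tau_t(K_4)=2$, $\nu_t(K_4)=1$), we get
\[
\tau_t(G_1)=\tau_t(G_0)+b\,\tau_t(F)>2\nu_t(G_0)+2b\,\nu_t(F)=2\nu_t(G_1),
\]
a contradiction. Therefore no counterexample exists and Conjecture \ref{coj:tuza} holds for every graph.

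The one point requiring care — and the main (minor) obstacle — is that a counterexample to Tuza's conjecture is a graph with $\tau_t(G_0)>2\nu_t(G_0)$, which for \emph{integers} forces $\tau_t(G_0)\ge2\nu_t(G_0)+1$; I must make sure the filler contributes no slack, i.e. that $\tau_t(F)\le2\nu_t(F)$ with equality, so the strict inequality survives into $G_1$. Using $K_4$ (or edge-disjoint unions of $K_4$'s, which Tuza observed attain the ratio $2$) handles this cleanly: each $K_4$ adds $2$ to $\tau_t$ and $1$ to $\nu_t$, preserving the surplus. A secondary bookkeeping point is that $G_1$ need not be connected, but the statement of Conjecture \ref{coj:tuza} is over all graphs, and irreducibility of $G_1$ is inherited provided $G_0$ is taken irreducible (which we may do, since the conjecture for irreducible graphs implies it in general — edges in no triangle are irrelevant to $\nu_t$, $\tau_t$, and $\mathscr T$). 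With these observations in place the argument is a short limit-and-additivity computation.
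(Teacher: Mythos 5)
Your proof is correct and follows essentially the same route as the paper: both pad the putative bad graph with disjoint copies of $K_4$ (ratio $\nu_t/|\mathscr T|=\frac14$, $\tau_t=2\nu_t$) and use additivity of $\nu_t$, $\tau_t$, $|\mathscr T|$ over disjoint unions to transfer the hypothesis back to the original graph. The only difference is presentational: the paper argues directly, choosing an explicit number $k$ of copies via a rational approximation $1/4-\delta=i/j$, whereas you argue by contradiction with a ``sufficiently many copies'' limit, which is an equally valid bookkeeping choice.
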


 \begin{proof} If $\delta\geq \frac14$, the theorem is trivial. We consider $0 < \delta < \frac14$. As the set of rational numbers is dense, we may assume $\delta\in\mathbb Q$ and   $ 1/4-\delta=i/j$ for some $i,j\in\mathbb N$. Therefore $i/j<1/4$ gives {$4i+1\le j$}, i.e., $4+1/i\le j/i$. It remains to prove that for any graph $G$ with $\nu_t(G)<(i/j)|\mathscr T_G|$ there holds $\tau_t(G)\le2\nu_t(G)$.

 Write $k$ for the positive integer $i|\mathscr T_G|-j\cdot\nu_t(G)$. Let $G'$ be the disjoint union of $G$ and $k$ copies of {$K_4$}. Clearly, $|\mathscr T_{G'}|=|\mathscr T_G|+k|\mathscr T_{K_4}|=|\mathscr T_G|+4k$, $\tau_t(G')=\tau_t(G)+k\cdot\tau_t(K_4)=\tau_t(G)+2k$ and $\nu_t(G')=\nu_t(G)+k\cdot\nu_t(K_4)=\nu_t(G)+k$.
 It follows that
\begin{eqnarray*}
(i/j)|\mathscr T_{G'}|&=&(i/j)(|\mathscr T_G|+4k)\\
&=&(i/j)((k+j\cdot\nu_t(G))/i+4k)\\
&=&(i/j)(j\cdot\nu_t(G)/i+(4+1/i)c)\\
&\le& \nu_t(G)+k\\
&=& \nu_t(G')
\end{eqnarray*}
where the inequality is guaranteed by $4+1/i\le j/i$. So $\nu_t(G')\ge(1/4-\delta)|\mathscr T_{G'}|$ together with the hypothesis of the theorem implies $\tau_t(G')\le2\nu_t(G')$, i.e., $\tau_t(G)+2k\le2(\nu_t(G)+k)$, giving $\tau_t(G)\le2\nu_t(G)$ as desired.
 \end{proof}

 \redcomment{In the proof of the above theorem, the property of $K_4$ that  $\nu_t(K_4)/|\mathscr T_{K_4}|=1/4$ and $\tau_t(K_4)/\nu_t(K_4)=2$ plays an important role. It helps to reduce the general Tuza's conjecture to the special case where $\nu_t(G)\ge(1/4-\delta)|\mathscr T_G|$.} 



The sufficient condition that compares the triangle packing number with the number of edges is based on the fact that every simple graph $G=(V,E)$ has a bipartite subgraph of at least $|E|/2$ edges, {which can be found in polynomial time}. Since this subgraph does not {contain} any triangle, we deduce that $\tau_t(G)\le|E|/2$, which implies the following result.

\begin{corollary}\label{cor:1/4}   If $G=(V,E)$  is a graph such that $\nu_{t}(G)/|E|\ge c$ for some $c>0$, then $\tau_t(G)/\nu_t(G)\le 1/(2c)$. In particular, if  $\nu_{t}(G)/|E|\ge1/4$, then $\tau_t(G)/\nu_t(G)\le 2$.\qed
\end{corollary}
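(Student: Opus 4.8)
The plan is to exploit the elementary fact that every simple graph contains a bipartite subgraph carrying at least half of the edges, and to turn this into a triangle cover. First I would invoke the standard Max-Cut lower bound: given $G=(V,E)$, there is a partition $V=V_1\cup V_2$ such that the number of edges with one end in $V_1$ and one in $V_2$ is at least $|E|/2$; such a partition can be produced in polynomial time by the greedy local-search procedure (repeatedly move a vertex to the other side whenever this increases the number of crossing edges — each move strictly increases a bounded integer quantity, so the process terminates in polynomially many steps, and at a local optimum every vertex has at least half its incident edges crossing, whence the total count is $\ge|E|/2$). Let $F\subseteq E$ be the set of non-crossing edges, so $|F|\le|E|/2$.

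Next I would observe that $G\setminus F$ is bipartite with parts $V_1,V_2$, and a bipartite graph has no odd cycle, in particular no triangle; hence $F$ is a triangle cover of $G$. This already yields $\tau_t(G)\le|E|/2$, and the bound is constructive. Combining with the hypothesis $\nu_t(G)\ge c|E|$, i.e. $|E|\le\nu_t(G)/c$, gives
\[
\tau_t(G)\le\frac{|E|}{2}\le\frac{\nu_t(G)}{2c},
\]
which is the first assertion. Specializing to $c=\tfrac14$ gives $\tau_t(G)\le2\nu_t(G)$, the second assertion and the instance of Tuza's conjecture recorded in the corollary.

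The argument is essentially routine; the only point requiring any care is the polynomial-time claim, and even there the obstacle is merely bookkeeping — one must note that the greedy switching algorithm terminates in at most $|E|$ iterations since the crossing-edge count is a nonnegative integer bounded by $|E|$ that strictly increases at each step, and each iteration scans $O(|V|+|E|)$ adjacencies. No structural difficulty arises, and the statement follows immediately once the Max-Cut half-bound is in hand.
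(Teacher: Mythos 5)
Your argument is correct and is exactly the paper's proof: remove the non-crossing edges of a max-cut-type bipartition (at least half the edges stay, found greedily in polynomial time), note the bipartite remainder is triangle-free so $\tau_t(G)\le|E|/2$, and combine with $|E|\le\nu_t(G)/c$. No difference in approach and no gaps.
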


{Thus if $\nu_{t}(G)/|E|\ge c$ for some $c>0$, then a triangle cover of $G$ with size at most $\nu_t(G)/(2c)$ can be found in polynomial time.} Complementary to Corollary \ref{th:1/3} whose condition mainly takes care of sparse graphs, the second statement of Corollary~\ref{cor:1/4} applies to many dense graphs, including complete graphs on $ 25$ or more vertices.

Similar to Corollary \ref{th:1/3} and  Theorem  \ref{th:a-fourth},   by which  our future investigation space on Tuza's conjecture shrinks to interval $(\frac14-\epsilon,\frac13)$ w.r.t. $\nu_t(G)/|\mathscr T_G|$, Corollary \ref{cor:1/4}   and the following Theorem \ref{th:1/5} narrow the interval w.r.t. $\nu_t(G)/|E|$   to $(\frac15-\epsilon,\frac14)$. Moreover, when taking $c = \frac15$ in Corollary \ref{cor:1/4}. we obtain $ \frac{1}{2c} = 2.5$, still better than Haxell's general bound $2.87$ \cite{haxell19}.

 \begin{theorem}\label{th:1/5}
If there exists some real $\delta>0$ such that Conjecture \ref{coj:tuza} holds for every graph $G$ with $\nu_t(G)/|E|\ge1/5-\delta $, then Conjecture \ref{coj:tuza} holds for every graph.
 \end{theorem}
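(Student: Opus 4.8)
The plan is to run the same ``padding'' argument as in the proof of Theorem~\ref{th:a-fourth}, with the gadget $K_4$ (which extremizes $\nu_t/|\mathscr T_G|$ at $\tfrac{1}{4}$ while attaining $\tau_t/\nu_t=2$) replaced by $K_5$, which plays the analogous role for the ratio $\nu_t/|E|$. The first step is to pin down the parameters of $K_5$: it has $|E(K_5)|=10$, $\nu_t(K_5)=2$, and $\tau_t(K_5)=4$. The equality $\tau_t(K_5)=4$ comes from the lower bound $\tau_t(K_5)\ge\lceil 10/3\rceil=4$ (each edge lies in exactly three triangles) together with the explicit cover $\{\,12,13,23,45\,\}$; the equality $\nu_t(K_5)=2$ comes from the explicit packing $\{\,123,145\,\}$ together with a parity argument: three pairwise edge-disjoint triangles would use $9$ of the $10$ edges and hence form $K_5$ minus a single edge, but that subgraph has two vertices of odd degree, whereas a union of pairwise edge-disjoint triangles has all degrees even. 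The two features that matter are $\nu_t(K_5)/|E(K_5)|=\tfrac{1}{5}$ (exactly the threshold) and $\tau_t(K_5)=2\,\nu_t(K_5)$ (so $K_5$ is tight for Conjecture~\ref{coj:tuza}).

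Next I reduce to rationals. We may assume $0<\delta<\tfrac{1}{5}$, since otherwise the hypothesis already covers all graphs. Choose $i,j\in\mathbb N$ with $\tfrac{1}{5}-\delta=\tfrac{i}{j}$; then $\tfrac{i}{j}<\tfrac{1}{5}$ forces $j\ge 5i+1$. It suffices to prove $\tau_t(G)\le 2\,\nu_t(G)$ for every graph $G$ with $\nu_t(G)/|E|<\tfrac{i}{j}$, since graphs with $\nu_t(G)/|E|\ge\tfrac{i}{j}$ are handled directly by the hypothesis (and triangle-free graphs are trivial). Fix such a $G$, set $k:=i\,|E|-j\,\nu_t(G)$, which is a positive integer, and let $G'$ be the disjoint union of $G$ and $k$ copies of $K_5$. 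Since $|E|$, $\nu_t$ and $\tau_t$ are all additive over connected components, $|E(G')|=|E|+10k$, $\nu_t(G')=\nu_t(G)+2k$ and $\tau_t(G')=\tau_t(G)+4k$.

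Then I check that $G'$ satisfies the hypothesis of the theorem. Indeed,
\[
j\,\nu_t(G')-i\,|E(G')|=\bigl(j\,\nu_t(G)-i\,|E|\bigr)+k(2j-10i)=-k+k(2j-10i)=k\,(2j-10i-1)\ \ge\ 0,
\]
using $j\ge 5i+1$ in the last step. Hence $\nu_t(G')/|E(G')|\ge\tfrac{i}{j}=\tfrac{1}{5}-\delta$, so by hypothesis $\tau_t(G')\le 2\,\nu_t(G')$, that is $\tau_t(G)+4k\le 2\,\nu_t(G)+4k$, which gives exactly $\tau_t(G)\le 2\,\nu_t(G)$.

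There is no real analytic obstacle here: it is the same reduction-by-extremal-gadget scheme used for Theorem~\ref{th:a-fourth}, and once the gadget is chosen the arithmetic is forced. The only genuine content is locating a graph that is simultaneously extremal for Conjecture~\ref{coj:tuza} and attains $\nu_t/|E|=\tfrac{1}{5}$; $K_5$ is the natural candidate, and the mildly non-obvious point to verify is $\nu_t(K_5)=2$ rather than $3$, which is where the degree-parity argument enters. The identity $\tau_t(K_5)=2\,\nu_t(K_5)$ is precisely what makes the final inequality collapse to $\tau_t(G)\le 2\,\nu_t(G)$ rather than to a weaker bound.
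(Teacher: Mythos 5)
Your proof is correct and follows essentially the same route as the paper: pad $G$ with $k=i|E|-j\,\nu_t(G)$ disjoint copies of $K_5$ so that the resulting graph meets the threshold $\nu_t/|E|\ge \tfrac15-\delta$, then use additivity and $\tau_t(K_5)=2\nu_t(K_5)$ to transfer the bound back to $G$. The only difference is that you also verify $\nu_t(K_5)=2$ and $\tau_t(K_5)=4$ explicitly, which the paper takes for granted; that is a harmless (indeed welcome) addition.
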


 \begin{proof} We use the similar trick to that in proving Theorem \ref{th:a-fourth}; we add a number of complete graphs on five (instead of four) vertices. We may assume $\delta\in(0,\frac{1}{5})\cap \mathbb Q$ and   $ 1/5-\delta=i/j$ for some $i,j\in\mathbb N$. Therefore 
$i/j<1/5$ and the integrality of $i,j$ imply $5+1/i\le j/i$. To prove Tuza's conjecture for each graph $G$ with $\nu_t(G)<(i/j)|E|$, we write $k=i|E|-j\cdot\nu_t(G)\in\mathbb N$.  Let $G'=(V',E')$ be the disjoint union of $G$ and $k$ copies of  $K_5$'s. Then $|E'|=|E|+10k$, $\tau_t(G')=\tau_t(G)+k\cdot\tau_t(K_5)=\tau_t(G)+4k$, $\nu_t(G')=\nu_t(G)+k\cdot\nu_t(K_5)=\nu_t(G)+2k$, and
\[(i/j)|E'|=(i/j)(|E|+10k)
=(i/j)(j\cdot\nu_t(G)/i+(10+1/i)k)\le \nu_t(G)+2k= \nu_t(G')\]
where the inequality is guaranteed by $10+1/i\le 2j/i$. So $\nu_t(G')\ge(1/5-\delta)|E'|$ together with the hypothesis the theorem implies $\tau_t(G')\le2\nu_t(G')$, i.e., $\tau_t(G)+4k\le2(\nu_t(G)+2k)$, giving $\tau_t(G)\le2\nu_t(G)$ as desired.
 \end{proof}



\subsection{Graphs with many edges on triangles}\label{sec:irr}
Each graph has a unique maximum irreducible subgraph. Tuza's conjecture is valid for a graph if and only the conjecture is valid for its maximum irreducible subgraph. In this section, we study sufficient conditions for Tuza's conjecture on irreducible graphs that  bound the number of edges {below in terms of} the number of triangles.

\begin{theorem}\label{th:2}
If $G=(V,E)$ is an irreducible graph such that  $|E|/|\mathscr T_G|\ge2$, then a triangle cover of $G$ with cardinality at most $ 2\nu_t(G)$ can be found in $O(|\mathscr T_G|^2\log^2|\mathscr T_G|)$ time, which implies $\tau_t(G)/\nu_t(G)\le2$.
 \end{theorem}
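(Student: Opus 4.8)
### Proof Plan for Theorem \ref{th:2}

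The plan is to mimic the structure of Theorem \ref{c}, but to exploit the edge-count hypothesis through the feedback \emph{edge} set bound of Lemma \ref{cyclecover} rather than the feedback vertex set bound of Theorem \ref{1/3}. Work in the triangle hypergraph $\mathcal H_G = (E, \mathscr T_G)$, which is linear and $3$-uniform with $\nu(\mathcal H_G) = \nu_t(G)$ and $\tau(\mathcal H_G) = \tau_t(G)$. The overall strategy is: first delete a cheap feedback set from $\mathcal H_G$ to make it acyclic; then solve the resulting acyclic instance exactly via Theorem \ref{cyclefree}; finally argue that the union of the deleted edges (as edges of $G$) and the exact transversal of the acyclic remainder is a triangle cover of $G$ of size at most $2\nu_t(G)$.

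The key quantitative step is bounding the cost of acyclification. Let $p$ be the number of components of $\mathcal H_G$ and let $n'$ denote the number of non-isolated vertices of $\mathcal H_G$ — equivalently, the number of edges of $G$ lying on some triangle, which by irreducibility is exactly $|E|$. By Lemma \ref{cyclecover}, a minimal FES $\mathcal F$ of $\mathcal H_G$ satisfies $|\mathcal F| \le 2|\!|\mathcal H_G|\!| - n' + p = 2|\mathscr T_G| - |E| + p$. Using the hypothesis $|E| \ge 2|\mathscr T_G|$ this already gives $|\mathcal F| \le p$. I would then argue that $p \le \nu_t(G)$: each component of $\mathcal H_G$ either is a single vertex (isolated, contributing nothing and not counted, or if we count it, it carries no triangle) or contains at least one edge of $\mathcal H_G$, i.e. at least one triangle of $G$; picking one triangle from each component with an edge gives pairwise edge-disjoint triangles (triangles in different components of $\mathcal H_G$ share no edge of $G$), so the number of components containing a triangle is at most $\nu_t(G)$. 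Hence $|\mathcal F| \le p \le \nu_t(G)$. (One should be slightly careful that $p$ counts only components that actually carry an edge; components that are single non-isolated vertices cannot occur, and the bound $|\mathcal F| \le 2|\mathscr T_G| - |E| + p$ should be applied with $p$ the number of components with at least one edge, which the proof of Lemma \ref{cyclecover} supports since isolated-vertex components contribute $2\cdot 0 - 1 + 1 = 0$ correctly, or can simply be discarded.)

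Now remove the edges of $\mathcal F$ from $\mathcal H_G$; the result $\mathcal H_G \setminus \mathcal F$ is acyclic, so by Theorem \ref{cyclefree} we can compute in $O(n'(|\mathscr T_G| + n'\log n')\log n') = O(|\mathscr T_G|^2 \log^2 |\mathscr T_G|)$ time a minimum transversal $\mathcal R$ of $\mathcal H_G \setminus \mathcal F$ with $|\mathcal R| = \tau(\mathcal H_G \setminus \mathcal F) = \nu(\mathcal H_G \setminus \mathcal F) \le \nu(\mathcal H_G) = \nu_t(G)$. Viewing $\mathcal F$ and $\mathcal R$ as subsets of $E$, every triangle of $G$ is an edge of $\mathcal H_G$; if it survives in $\mathcal H_G \setminus \mathcal F$ it is hit by $\mathcal R$, and otherwise it was already hit by $\mathcal F$, so $\mathcal F \cup \mathcal R$ is a triangle cover of $G$ with $|\mathcal F \cup \mathcal R| \le |\mathcal F| + |\mathcal R| \le \nu_t(G) + \nu_t(G) = 2\nu_t(G)$, which gives $\tau_t(G) \le 2\nu_t(G)$. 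The finding of a minimal FES takes $O(|\mathscr T_G|^2)$ time as noted before Lemma \ref{cyclecover}, so the dominating cost is the acyclic transversal computation, giving the stated $O(|\mathscr T_G|^2 \log^2 |\mathscr T_G|)$ running time. I expect the only real subtlety — the ``main obstacle'' — to be the bookkeeping around components and isolated vertices: making sure that the $p$ in Lemma \ref{cyclecover} can legitimately be bounded by $\nu_t(G)$, and that irreducibility is used correctly to identify the non-isolated vertex count with $|E|$; the rest is a direct assembly of the hypergraph lemmas already proved.
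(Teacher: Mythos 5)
Your proposal follows essentially the same route as the paper's proof: build the triangle hypergraph, take a minimal feedback edge set $\mathcal F$ with $|\mathcal F|\le 2|\mathscr T_G|-|E|+p\le p$ via Lemma~\ref{cyclecover}, use irreducibility to get $p\le\nu(\mathcal H_G)=\nu_t(G)$, solve the acyclic remainder exactly by Theorem~\ref{cyclefree}, and add the two bounds. The only slip is a type confusion at the end: $\mathcal F$ is a set of hyperedges of $\mathcal H_G$, i.e.\ triangles of $G$, not a subset of $E$, so ``$\mathcal F\cup\mathcal R$'' is not literally a triangle cover; the paper repairs exactly this by choosing one edge $e_F\in F$ of $G$ for each $F\in\mathcal F$ and using $\mathcal S=\{e_F:F\in\mathcal F\}\cup\mathcal R$, which has the same cardinality bound $|\mathcal F|+|\mathcal R|\le 2\nu_t(G)$, so your argument is correct once this trivial adjustment is made.
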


\begin{proof}Suppose that the linear 3-uniform hypergraph $\mathcal H=(E,\mathscr T_G)$ associated to $G$ has exactly $p$ components. By Lemma~\ref{cyclecover}, we can find in $O(|\mathscr T_G|^2)$ time a minimal FES $\mathcal F$ of $\mathcal H$ such that   $|\mathcal F|\leq 2|\mathscr T_G|-|E|+p\le p$. Since $G $ is irreducible, we see that  $\mathcal H$ has no isolated vertices, i.e., every component of $\mathcal H$ has at least one edge. Thus  $\nu(\mathcal H)\geq p\geq |\mathcal F|$.   For the acyclic hypergraph $ \mathcal H\setminus\mathcal F$,  By Lemma~\ref{cyclefree} we may found in $O(|\mathscr T_G|^2\log^2|\mathscr T_G|)$ time a minimum transversal $\mathcal R$ of $\mathcal H\setminus \mathcal F$ such that   \[|\mathcal R|=\tau(\mathcal H\setminus\mathcal F)=\nu(\mathcal H\setminus\mathcal F).\] Observe that $\mathcal R\subseteq E$ and $\mathcal F\subseteq\mathscr T_G$. If $\mathcal F=\emptyset$, set $\mathcal S=\emptyset$, else for each $F\in\mathcal F$, take $e_F\in E$ with $e_F\in F$, and set $\mathcal S=\{e_F:F\in\mathcal F\}$. It is clear that $\mathcal R\cup\mathcal S$ is a transversal of $\mathcal H$ (i.e., a triangle cover of $G$) with cardinality $|\mathcal R\cup\mathcal S|\leq\nu(  \mathcal H\setminus\mathcal F)+ |\mathcal F| \leq 2\nu(\mathcal H)=2\nu_t(G)$, establishing the theorem.
 \end{proof}

We observe that the  graphs $G$ which consist of a number of triangles  sharing a common edge satisfy  $|E(G)|\ge 2|\mathscr T_G|$, but   {$\nu_{t}(G)< |\mathscr T_{G}|/3$} when $|\mathscr T_{G}|\geq 4$. So in  some sense, Theorem~\ref{th:2} works a supplement of Corollary \ref{th:1/3} for sparse graphs.

A multigraph is {\em series-parallel} if and only if it can be constructed
from a single edge by iteratively performing the {\em D-Operation} of doubling an edge and/or the {\em S-Operation} of subdividing an edge. 
{A graph is a {\em $2$-tree} if and only if it can be constructed from a single edge by iteratively performing the {\em DS-Operation} of doubling an edge and subdivide the new edge with a new vertex.}
A subgraph of a 2-tree is called a {\em partial $2$-tree}. It is well-known that a {(simple)} graph is a partial 2-tree if and only if  all of its maximal 2-connected subgraphs are series-parallel \cite{bodlaender1998}. Thus, a series-parallel  {(simple)}
graph is a partial 2-tree. {In the following we show that every partial 2-tree $G$ satisfies $|E(G)|\ge 2|\mathscr T_G|$.}
\begin{corollary}
If $G=(V,E)$ is a partial $2$-tree, then a triangle cover of  $G$ with cardinality at most $ 2\nu_t(G)$ can be found in $O(|E|^2\log^2|E|)$ time.
 \end{corollary}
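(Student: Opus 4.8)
The plan is to apply Theorem~\ref{th:2}, so it suffices to show that every partial $2$-tree $G=(V,E)$ satisfies $|E|\ge 2|\mathscr T_G|$; the running time then follows from Theorem~\ref{th:2} since $|\mathscr T_G|\le|E|$ for irreducible graphs (and if $G$ is not irreducible, pass to its maximum irreducible subgraph, which is still a partial $2$-tree because partial $2$-trees are closed under taking subgraphs). First I would reduce to the $2$-connected case: every triangle of $G$ lies inside one of the maximal $2$-connected subgraphs (``blocks'') of $G$, and distinct blocks share at most one vertex and hence no edge, so both $|E|$ and $|\mathscr T_G|$ are the sums of the corresponding quantities over the blocks that are not single edges. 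Thus it is enough to prove $|E(H)|\ge 2|\mathscr T_H|$ for every $2$-connected series-parallel (simple) graph $H$.

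For the $2$-connected case I would induct on $|E(H)|$ using the series-parallel construction, but run it in reverse: a $2$-connected series-parallel multigraph with at least two edges is obtained from a smaller one by an S-operation (subdividing an edge) or a D-operation (doubling an edge). The cleanest route is to track the inequality $|E|-2|\mathscr T|\ge 0$ through these reverse operations while staying inside simple graphs, which forces one to treat doubled edges carefully: a simple series-parallel graph is exactly a $2$-connected series-parallel multigraph with no parallel edges, equivalently it is built so that every D-operation is immediately followed by an S-operation on one of the two parallel copies (this is the DS-operation in the $2$-tree description). So the concrete induction step is the DS-operation: given a simple $2$-connected series-parallel graph $H$ with edge $uv$, replace $uv$ by a new vertex $w$ and edges $uw$, $wv$ (keeping $uv$). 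This adds $2$ edges ($uw$, $wv$) and adds exactly one triangle $uvw$, while possibly $uv$ was in triangles before — but those persist and are not double-counted. Hence $|E(H')|-2|\mathscr T_{H'}| = (|E(H)|+2) - 2(|\mathscr T_H|+1) = |E(H)|-2|\mathscr T_H|$, and the quantity is invariant; starting from a single edge where it equals $1$, it stays $\ge 1>0$. For general partial $2$-trees one only loses by deleting edges, which cannot increase $2|\mathscr T|-|E|$ by more than... actually deleting an edge decreases $|E|$ by $1$ but decreases $|\mathscr T|$ by at least $0$; I would instead argue directly that $G$ is a subgraph of a $2$-tree $T$, so $\mathscr T_G\subseteq\mathscr T_T$ in the sense of edge-sets, and handle this by the block decomposition above applied to the series-parallel components, avoiding any subtraction argument.

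The main obstacle I expect is bookkeeping around multigraphs versus simple graphs: the series-parallel construction is naturally stated for multigraphs, but $\mathscr T_G$ and Theorem~\ref{th:2} are about simple graphs, so I must be careful that the D-operation alone does not create a valid simple graph and that the inductive invariant is maintained only through genuinely simple intermediate graphs. The safe fix is to use the $2$-tree / DS-operation characterization quoted in the excerpt rather than the raw S/D operations: a $2$-tree is built from an edge by repeated DS-operations, each adding two edges and one triangle, so a $2$-tree on $n$ vertices has $2n-3$ edges and $n-2$ triangles, giving $|E|=2|\mathscr T|+1$ exactly; then a partial $2$-tree, being a subgraph, has $|E|\le 2n-3$ and, via the block decomposition into $2$-connected series-parallel pieces plus edges lying in no triangle, satisfies $|E|\ge 2|\mathscr T_G|$. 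Once this structural inequality is in hand, the corollary is immediate from Theorem~\ref{th:2}, and the time bound $O(|E|^2\log^2|E|)$ follows by substituting $|\mathscr T_G|\le|E|/2$ into the bound of Theorem~\ref{th:2}.
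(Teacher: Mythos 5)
Your proposal follows essentially the same route as the paper: reduce to the irreducible, $2$-connected (hence series-parallel) case, establish $|E|\ge 2|\mathscr T_G|$ by tracking the construction of a simple series-parallel graph from a single edge, and then invoke Theorem~\ref{th:2} (with $|\mathscr T_G|\le|E|/2$ giving the stated running time). The only loose end is that, besides the DS-operation, you should explicitly check the standalone S-operation (subdivision), which adds one edge and cannot increase the number of triangles, so the quantity $|E|-2|\mathscr T_G|$ does not decrease there either --- exactly the bookkeeping the paper's proof records.
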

\begin{proof} In $O(|E|^2)$ time, we may remove from $G$ all edges that are not contained in any triangles. The resulting graph is still a partial 2-tree. So we may assume without loss of generality that $G$ is irreducible. Since each triangle of $G$ is contained a unique maximal 2-connected subgraph of $G$, we may further assume that $G$ is 2-connected. It follows that $G$ is series-parallel. Since $G$ is simple, it can be constructed from a single edge by iteratively performing the S-Operation and/or the DS-Operation. The S-Operation  increases the number of edges and dose not change the number of triangles, while the DS-Operation   increases the number of edges  by 2 and the number of triangles by 1. Therefore, we have $|E|\ge 2|\mathscr T_G|$. The conclusion follows from Theorem \ref{th:2}.
\end{proof}

{Note that}  partial $2$-trees are $K_4$-free planar graphs. The validity of Tuza's conjecture on partial $2$-trees has been verified in \cite{tuza1990,HKT2012}. {The 2-approximation algorithm for finding a minimum triangle cover in planar graphs implied by Tuza's proof \cite{tuza1990} runs in $O(|E|)$ time.}

 Along the same line as in the previous subsection,  regarding Tuza's conjecture on graph $G$, Theorem \ref{th:2} and the  following Theorem \ref{th:3/2} jointly narrow the interval w.r.t. $|E(G)|/|\mathscr T_G|$ to $(1.5-\epsilon,2)$ for future study.
  \begin{theorem} \label{th:3/2}
If there exists some real $\delta>0$ such that Conjecture \ref{coj:tuza} holds for every irreducible  graph $G=(V,E)$  with $|E|/|\mathscr T_G|\ge3/2-\delta$, then Conjecture \ref{coj:tuza} holds for every irreducible graph (and therefore every graph).
 \end{theorem}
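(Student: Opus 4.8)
The plan is to mimic the reduction used in the proofs of Theorems~\ref{th:a-fourth} and \ref{th:1/5}, but now using a gadget graph whose edge-to-triangle ratio is small enough to pull an arbitrary irreducible graph into the regime $|E|/|\mathscr T_G|\ge 3/2-\delta$. Concretely, I would take disjoint copies of $K_4$: it is irreducible, has $|E(K_4)|=6$ edges and $|\mathscr T_{K_4}|=4$ triangles, so $|E(K_4)|/|\mathscr T_{K_4}|=3/2$, and it satisfies $\tau_t(K_4)=2=2\nu_t(K_4)$, so it sits exactly on the boundary of Tuza's conjecture. Adding copies of $K_4$ to an irreducible graph keeps it irreducible and keeps all four quantities ($|E|$, $|\mathscr T|$, $\tau_t$, $\nu_t$) additive, which is what makes the arithmetic go through.

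First I would dispose of the trivial case $\delta\ge 1/2$, then fix $0<\delta<1/2$ and, using density of the rationals, write $3/2-\delta=i/j$ with $i,j\in\mathbb N$; since $i/j<3/2$, integrality gives $2i+1\le 3j$, equivalently $2+1/i\le 3j/i$. Given an arbitrary irreducible graph $G=(V,E)$ for which we want $\tau_t(G)\le 2\nu_t(G)$, I may assume $|E|/|\mathscr T_G|<3/2-\delta=i/j$ (otherwise the hypothesis applies directly to $G$). Set $k:=i|\mathscr T_G|-j|E|\in\mathbb N$ (positive by the assumed strict inequality, up to checking the sign convention) and let $G'$ be the disjoint union of $G$ with $k$ copies of $K_4$. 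Then $|E'|=|E|+6k$, $|\mathscr T_{G'}|=|\mathscr T_G|+4k$, $\tau_t(G')=\tau_t(G)+2k$, $\nu_t(G')=\nu_t(G)+k$, and $G'$ is irreducible.

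The heart of the argument is the inequality $|E'|\ge (3/2-\delta)|\mathscr T_{G'}|$, i.e. $j|E'|\ge i|\mathscr T_{G'}|$. Substituting, $j|E'|=j|E|+6jk$ and $i|\mathscr T_{G'}|=i|\mathscr T_G|+4ik$; using $j|E|=i|\mathscr T_G|-k$ this reduces to requiring $6jk-k\ge 4ik$, i.e. $6j-1\ge 4i$, which follows from $2i+1\le 3j$ (multiply by $2$: $4i+2\le 6j$, hence $4i\le 6j-2\le 6j-1$). So the hypothesis of the theorem applies to $G'$, giving $\tau_t(G')\le 2\nu_t(G')$, i.e. $\tau_t(G)+2k\le 2(\nu_t(G)+k)$, hence $\tau_t(G)\le 2\nu_t(G)$. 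Finally, since Tuza's conjecture for an arbitrary graph reduces to its maximum irreducible subgraph (as noted at the start of Section~\ref{sec:irr}), validity on all irreducible graphs yields validity on all graphs.

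The only real subtlety — the ``main obstacle'' — is bookkeeping around whether $k$ is genuinely a positive integer and whether the target inequality is the right direction: because the constraint here is a \emph{lower} bound $|E|/|\mathscr T_G|\ge 3/2-\delta$ rather than the \emph{upper} bounds appearing in Theorems~\ref{th:a-fourth} and \ref{th:1/5}, one must be careful that padding with $K_4$'s (which have ratio exactly $3/2>i/j$) pushes the ratio \emph{up} toward the threshold, so $k$ should be chosen so that $j|E'|-i|\mathscr T_{G'}|$ is nonnegative; a quick check shows $k=i|\mathscr T_G|-j|E|$ works and is positive precisely when $|E|/|\mathscr T_G|<i/j$, which is the case we are in. Everything else is the same additivity-and-arithmetic template already used twice in the paper.
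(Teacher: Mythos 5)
Your proposal is correct and follows essentially the same route as the paper's own proof: padding with $k=i|\mathscr T_G|-j|E|$ disjoint copies of $K_4$ (ratio exactly $3/2$, tight for Tuza), using additivity of $|E|$, $|\mathscr T|$, $\tau_t$, $\nu_t$, and the integrality bound $2i+1\le 3j$ (equivalently $4i+1\le 6j$) to push $|E'|/|\mathscr T_{G'}|$ up to the threshold $i/j$. The only cosmetic difference is your preliminary reduction to $\delta<1/2$, which is harmless since the hypothesis only strengthens as $\delta$ grows (the paper instead allows any $\delta\in(0,3/2)\cap\mathbb Q$ directly).
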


 \begin{proof} Again we apply the trick of adding copies of $K_4$. We may assume $\delta\in(0,3/2)\cap\mathbb Q$ and   $ 3/2-\delta=i/j$ for some $i,j\in\mathbb N$. Therefore $2i+1\le 3j$, implying $(i/j)(4+1/i)\le 6$.

 For any irreducible graph $G$ with $|E|<(i/j)|\mathscr T_G|$, we write $k=i|\mathscr T_G|-j|E|\in\mathbb N$. Let $G'$ be the disjoint union of $G$ and $k$ copies of $K_4$. 
 Then $G'$ is irreducible, and
  \[
(i/j)|\mathscr T_{G'}|=(i/j)(|\mathscr T_G|+4k)=(i/j)(j|E|/i+(4+1/i)k)\le |E|+6k= |E'|.\]
 It follows from the hypothesis of the theorem that $\tau_t(G')\le2\nu_t(G')$, i.e., $\tau_t(G)+2k\le2(\nu_t(G)+k)$, giving $\tau_t(G)\le2\nu_t(G)$ as desired.
 \end{proof}



\subsection{Erd\H{o}s-R\'{e}nyi graphs with high densities}\label{sec:er}

Let $n$ be a positive integer, and let $p\in[0,1]$. The Erd\H{o}s-R\'{e}nyi random graph model \cite{Alon2008} is a probability space over the set $\mathcal G(n,p)$ of graphs $G=(V,E)$ on the vertex set $V=\{1,...,n\}$, where an edge between vertices $i$ and $j$ is included in $E$ with probability $p$ independent from every other possible edge, i.e., \[\text{\bf Pr}[ij\in E]=p\text{ for each pair of distinct }i,j\in V.\] The $\mathcal G(n,p)$   model is often used in the probabilistic method for tackling problems in various areas such as graph theory and combinatorial optimization.

The following result on the triangle packing numbers of complete graphs \cite{Brualdi2009} is useful in deriving a good estimation for the triangle packing numbers of graphs in $\mathcal G(n,p)$.

\begin{theorem}[\cite{Brualdi2009}]\label{steiner}
$\nu_t(K_{n})=|E(K_{n})|/3$   if and only if $n\equiv1,3\pmod6$.\qed
 \end{theorem}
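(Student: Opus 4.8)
The plan is to reinterpret the equality $\nu_t(K_n)=|E(K_n)|/3$ as a design-theoretic statement and then treat the two implications separately. Since the triangles of a triangle packing are pairwise edge-disjoint and each consumes exactly three edges, one always has the trivial bound $\nu_t(K_n)\le |E(K_n)|/3=\binom n2/3$. Hence $\nu_t(K_n)=|E(K_n)|/3$ holds precisely when $K_n$ has a triangle packing that uses \emph{all} of its edges, i.e. precisely when $E(K_n)$ can be partitioned into triangles, equivalently when a Steiner triple system of order $n$ exists. Thus the theorem is exactly the classical existence theorem for Steiner triple systems, and I would prove it in that guise.

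For the ``only if'' direction I would use an elementary counting argument. Assume $E(K_n)$ is partitioned into a family $\mathcal T$ of triangles. Fix a vertex $v$: each triangle of $\mathcal T$ through $v$ uses two of the $n-1$ edges at $v$, and each such edge lies in exactly one triangle of $\mathcal T$, so $n-1$ is even and $n$ is odd. Also $|\mathcal T|=\binom n2/3$ is an integer, so $3\mid \binom n2 = n(n-1)/2$, forcing $n\equiv 0$ or $1\pmod 3$. The two conditions ``$n$ odd'' and ``$n\equiv 0,1\pmod 3$'' together leave exactly $n\equiv 1,3\pmod 6$ (the small cases $n\in\{1,3\}$ being trivially consistent).

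For the ``if'' direction I would exhibit, for every $n\equiv 1,3\pmod 6$, an explicit decomposition of $E(K_n)$ into triangles; this is the substantive part. For $n=6m+3$ I would use the Bose construction on the vertex set $\mathbb Z_{2m+1}\times\{1,2,3\}$: take the vertical triples $\{(x,1),(x,2),(x,3)\}$ for $x\in\mathbb Z_{2m+1}$, together with, for each pair $x\ne y$ in $\mathbb Z_{2m+1}$ and each $i\in\{1,2,3\}$ (indices read mod $3$), the triple $\{(x,i),(y,i),(z,i+1)\}$ where $z=(m+1)(x+y)$ is the unique ``half'' of $x+y$ in $\mathbb Z_{2m+1}$ (here $2m+1$ odd makes $2$ invertible). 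Using that $y\mapsto (m+1)(x+y)$ is a bijection, a short verification shows every edge of $K_{6m+3}$ — within a layer, between two consecutive layers, or vertical — is covered exactly once. For $n=6m+1$ I would run the analogous Skolem construction on $(\mathbb Z_{2m}\times\{1,2,3\})\cup\{\infty\}$, starting from a Skolem (or hooked Skolem) pairing of $\{1,\dots,2m\}$ to specify the triples, and again check that each edge is hit once. Alternatively one may argue recursively: base systems of orders $1,3,7,9$, together with the product rule producing a Steiner triple system of order $uv$ from ones of orders $u$ and $v$ and the doubling rules $v\mapsto 2v+1$ and $v\mapsto 2v+7$, generate all admissible orders.

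The main obstacle is precisely this sufficiency direction: writing down the designs and verifying that the listed triples partition $E(K_n)$. By contrast, the necessity direction is a one-line parity-and-divisibility count, and the reduction of the theorem to Steiner triple systems is immediate from the trivial upper bound $\nu_t(K_n)\le\binom n2/3$.
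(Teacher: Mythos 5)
The paper does not prove this statement at all: it is quoted as a known result (the existence theorem for Steiner triple systems, cited from Brualdi's textbook) and used as a black box in the proof of Theorem 3.9, so there is no in-paper argument to compare against. Your proposal is the standard classical proof of that cited result, and it is essentially correct. The reduction is right: since the triangles of a packing are edge-disjoint, $\nu_t(K_n)\le\binom n2/3$ with equality exactly when $E(K_n)$ decomposes into triangles, i.e.\ when an $\mathrm{STS}(n)$ exists. Your necessity argument is complete ($n-1$ even from pairing the edges at a fixed vertex, plus $3\mid\binom n2$, which together give $n\equiv1,3\pmod 6$). For sufficiency, the Bose construction you describe for $n=6m+3$ is correct, including the key point that $(m+1)=2^{-1}$ in $\mathbb Z_{2m+1}$ makes $y\mapsto(m+1)(x+y)$ a bijection, so each within-layer, cross-layer and vertical edge is covered exactly once. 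The only soft spot is the case $n=6m+1$: you name the Skolem construction but do not actually specify the triples or verify the covering, and this is the heaviest part of the theorem; the recursive alternative you mention (orders $1,3$ plus the $v\mapsto 2v+1$ and $v\mapsto 2v+7$ constructions, which reach every admissible order by a short residue check mod $12$) would also close it, but again the constructions themselves must be written out and verified to have a self-contained proof. As a replacement for the paper's citation, your outline is the right argument; as a complete proof it still owes the details of one of the two direct constructions (or of the recursions).
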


\begin{theorem}\label{random}
Suppose that $p> \sqrt{3}/2$ and $G=(V,E)\in\mathcal G(n,p)$. Then $\text{\bf Pr}[\nu_t(G)\geq |E|/4]=1-o(1)$ and $\text{\bf Pr}(\tau_t(G)\le2\nu_t(G))=1-o(1)$.
 \end{theorem}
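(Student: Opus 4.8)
The plan is to combine a concentration estimate for the number of edges and triangles in $\mathcal G(n,p)$ with the deterministic Corollary~\ref{cor:1/4}. First I would recall that for $G\in\mathcal G(n,p)$ we have $\mathbb E[|E|]=p\binom n2$, and by a standard Chernoff bound (or Azuma's inequality on the edge-exposure martingale) $|E|=(1+o(1))p\binom n2$ with probability $1-o(1)$. The heart of the argument is a lower bound on $\nu_t(G)$. The idea is to first pass to a large complete subgraph: since $p>\sqrt3/2$, a greedy or probabilistic argument shows that whp $G$ contains a clique on $m=\Theta(n)$ vertices, or more precisely one can take a random subset and argue the induced subgraph is ``almost complete'' and then extract a genuine clique of linear size using that $p^{\binom{2}{2}}=p$ per edge is bounded below; even more simply, one can find a linear-size vertex set spanning $\binom m2$ edges all present. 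I would then invoke Theorem~\ref{steiner}: choosing $m\equiv1,3\pmod 6$ of the right size, $K_m$ has a triangle packing of size $\binom m2/3$, and this packing, being a set of edge-disjoint triangles inside $G$, certifies $\nu_t(G)\ge\binom m2/3$.

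Next I would make the ratio $\nu_t(G)/|E|$ explicit. With $m=(1-o(1))\,c\,n$ for the appropriate clique-size constant $c=c(p)$ (which tends to $1$ as $p\to1$ and exceeds the threshold needed precisely when $p>\sqrt3/2$, since $\binom m2/3 \big/ \binom n2\approx c^2/3$ and we need $c^2/3\ge 1/4$, i.e. $c\ge\sqrt3/2$), we get $\nu_t(G)\ge (1-o(1))\frac{c^2}{3}\binom n2 \ge(1-o(1))\frac{1}{4}\binom n2\ge(1-o(1))\frac14\cdot\frac{|E|}{p}\ge \frac{|E|}{4}$ for $p$ close enough to $1$; one must check the constants line up so that the bound $\nu_t(G)\ge|E|/4$ holds whp exactly under the hypothesis $p>\sqrt3/2$. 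Once $\text{\bf Pr}[\nu_t(G)\ge|E|/4]=1-o(1)$ is established, Corollary~\ref{cor:1/4} applied to each graph in that high-probability event immediately yields $\tau_t(G)/\nu_t(G)\le2$, and hence $\text{\bf Pr}[\tau_t(G)\le2\nu_t(G)]=1-o(1)$.

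\textbf{Main obstacle.} The delicate point is pinning down the clique-size constant $c(p)$ and verifying that $p>\sqrt3/2$ is exactly the condition making $c(p)\ge\sqrt3/2$ after the $1/p$ factor from $|E|\approx p\binom n2$ is accounted for. A clean way around producing an honest large clique is to instead take a random equipartition of $V$ into $n/m$ blocks of size $m$ and argue that whp many blocks induce a copy of $K_m$; but a cleaner route still is to take a uniformly random subset $W$ of size $m=\lfloor\sqrt3\,n/2\rfloor$ (adjusted to be $1$ or $3\bmod 6$), observe $\mathbb E[|E(G[W])|]=p\binom m2$, and note $\binom m2-|E(G[W])|$ is the number of ``missing'' edges; deleting one endpoint of each missing edge leaves a clique, but this only gives a clique of size $m-(\text{missing edges})$, which is too lossy. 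So the real work is choosing the right threshold argument — most likely: fix $\epsilon>0$ with $p\ge(1+3\epsilon)\sqrt3/2$, take $W$ random of size $m=(1+\epsilon)\sqrt3\,n/2$, show $G[W]$ has a triangle packing of size $\ge(1-\epsilon)\binom m2/3$ via a deletion argument combined with Theorem~\ref{steiner} on a clean sub-clique, and check $(1-\epsilon)\binom m2/3\ge|E|/4$ whp. I expect the bookkeeping with these $\epsilon$'s, together with the concentration statements, to be the only real content; everything else is a direct appeal to Corollary~\ref{cor:1/4}.
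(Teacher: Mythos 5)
Your outer structure (Chernoff concentration for $|E|$, then invoke Corollary~\ref{cor:1/4} on the high-probability event $\nu_t(G)\ge|E|/4$) matches the paper, but the core of your argument — the lower bound on $\nu_t(G)$ — does not work. For any constant $p<1$, $\mathcal G(n,p)$ does \emph{not} contain a clique on $\Theta(n)$ vertices with high probability: the expected number of cliques of size $m$ is $\binom nm p^{\binom m2}\le 2^n p^{\binom m2}\to 0$ once $m\gg\log n$, and indeed the clique number of $\mathcal G(n,p)$ is $\Theta(\log_{1/p}n)$. So the step "whp $G$ contains a clique on $m=\Theta(n)$ vertices" is false, and no choice of the constant $c(p)$ can rescue it. Your fallback in the last paragraph fails for the same underlying reason: a uniformly random $W$ of size $m=\Theta(n)$ gives $G[W]$ distributed exactly as $\mathcal G(m,p)$, which has only about $p\binom m2$ edges, so it cannot contain a triangle packing of size $(1-\epsilon)\binom m2/3$ when $\epsilon$ is small and $p<1-\epsilon$; passing to a subset does not make the graph denser, and deletion arguments cannot manufacture the missing edges. (There is also a bookkeeping slip — to beat $|E|/4\approx p\binom n2/4$ your clique bound would need $c^2/3\ge p/4$, not $c^2/3\ge1/4$ — but that is moot given the main problem.)

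The missing idea is to apply Theorem~\ref{steiner} to the \emph{complete} graph $K_n$, not to a clique inside $G$: after deleting at most three vertices, $K_n$ admits a triangle decomposition, so one can fix in advance $k=\lceil (n-3)(n-4)/6\rceil$ edge-disjoint triangles $T_1,\dots,T_k$ of $K_n$. Each $T_i$ survives in $G$ (all three of its edges present) with probability $p^3$, and because the $T_i$ are edge-disjoint these survival indicators are \emph{independent}, so a Chernoff bound gives $\nu_t(G)\ge(1-\epsilon)kp^3\approx(1-\epsilon)p^3n^2/6$ whp. Comparing with $|E|\le(1+\epsilon)pn^2/2$ whp, the inequality $\nu_t(G)\ge|E|/4$ reduces to $\tfrac{4p^2(1-\epsilon)}{3(1+\epsilon)}>1$, which is achievable for some $\epsilon>0$ precisely because $p^2>3/4$, i.e.\ $p>\sqrt3/2$; this is where the threshold genuinely enters. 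From there your appeal to Corollary~\ref{cor:1/4} is exactly right.
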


\begin{proof} Let $K_n$ denote the complete graph on $V$. For each edge  $e\in K_{n}$, let $X_{e}$ be the indicator variable satisfying: $X_{e}=1$ if $e\in E$ and $X_{e}=0$ otherwise. Thus $\text{\bf E}[X_{e}]= p$, $X=\sum_{e\in K_{n}}X_{e}=|E|$, $ \text{\bf E}[X]=n(n-1)p/2$.  Since $X_{e}, e\in K_{n}$, are independent 0-1 variables, by Chernoff Bounds,  for each $\epsilon \in (0,1]$, $ \text{\bf Pr}[X> (1+\epsilon)\text{\bf E}[X]]\leq exp(-\epsilon^{2}\text{\bf E}[X]/3)=exp(-\epsilon^{2}n(n-1)p/6)=o(1)$. So  \[\text{\bf Pr}[X\leq (1+\epsilon)\text{\bf E}[X]]= \text{\bf Pr}(X\leq (1+\epsilon)n(n-1)p/2)
=1-o(1).\]

On the other  hand,   by Theorem~\ref{steiner},   we can make $K_n$ have an edge-disjoint triangle decomposition by  deleting  at most three vertices, which implies that $\nu_t(K_{n})$ is lower bounded by $k=\lceil(n-3)(n-4)/6\rceil$. Thus we can take $k$ edge-disjoint triangles $T_{1}, \ldots, T_k$ from $K_{n}$. For each $i\in [k]$, let  $Y_{i}$ be the indicator variable satisfying: $Y_{i}=1$ if $T_{i}\subseteq G $ and $Y_{i}=0$ otherwise. Note that $\text{\bf E}[Y_{i}]= p^{3}$ for each $i\in[k]$, $ \nu_t(G)\geq Y = \sum_{i=1}^kY_{i}$ and $ \text{\bf E}[Y]=kp^{3}$. Because $T_1,\ldots,T_k$ are edge-disjoint, $Y_1,\ldots,Y_k$ are independent 0-1 variables. By  Chernoff Bounds, for each $\epsilon \in (0,1)$, $\text{\bf Pr}[Y< (1-\epsilon)\text{\bf E}[Y]]\leq exp(-\epsilon^{2}\text{\bf E}[Y]/2)\leq exp(-\epsilon^{2}(n-3)(n-4)p^{3}/12)=o(1)$.Thus
\[\text{\bf Pr}[\nu_t(G)\geq (1-\epsilon)(n-3)(n-4)p^{3}/6]\ge\text{\bf Pr}[\nu_t(G)\geq (1-\epsilon)kp^3]\ge\text{\bf Pr}[Y\geq (1-\epsilon)\text{\bf E}[Y]]=1-o(1).\]

Recall that $p> \sqrt{3}/2$. We can take $\epsilon \in (0,1)$ such that $\lim_{n\rightarrow \infty}\frac{(1-\epsilon)(n-3)(n-4)p^{3}/6}{(1+\epsilon)n(n-1)p/8}= \frac{4p^{2}(1-\epsilon)}{3(1+\epsilon)}> 1$. So for sufficient large $n$, we always have $(1-\epsilon)(n-3)(n-4)p^{3}/6 > (1+\epsilon)n(n-1)p/8$. Since we have $\nu_t(G)\geq (1-\epsilon)(n-3)(n-4)p^{3}/6$   with probability $1-o(1)$ and have $|E|=X\leq (1+\epsilon)n(n-1)p/2$ with probability $1-o(1)$, we obtain $\nu_t(G)\geq |E|/4$ with probability $1-o(1)$. It follows from Corollary \ref{cor:1/4} that $\text{\bf Pr}(\tau_t(G)\le2\nu_t(G))=1-o(1)$.
\end{proof}

\section{Conclusion}\label{sec:conclude}
Using tools from hypergraphs, we design polynomial-time algorithms for finding {a} small triangle covers {in} graphs, which particularly imply several sufficient conditions for Tuza's conjecture {(Conjecture \ref{coj:tuza})}.
\paragraph{Triangle packing and covering.} In this paper, we have established new sufficient conditions $\nu_t(G)/|\mathscr T_G|\ge1/3$ and $|E|/|\mathscr T_G|\ge2$ for Tuza's conjecture on packing and covering triangles in graphs $G$. We prove the sufficiency by designing {polynomial-time} combinatorial algorithms for finding a triangle cover of $G$ whose cardinality is upper bounded by $2\nu_t(G)$. The high level {idea} of these algorithms is to remove {\em some edges} from $G$ so that the triangle hypergraph of the remaining graph is {\em acyclic} (see the proofs of Theorems \ref{1/3} and \ref{th:2}), which guarantees that the remaining graph has equal triangle covering number and triangle packing number, and a minimum triangle cover of the remaining graph is computable in polynomial time (see Theorem \ref{cyclefree}). It is well-known that the acyclic condition in Theorem \ref{cyclefree} could be weakened to odd-cycle-freeness \cite{yannakakis1985}. So the lower bound $1/3$ and $2$ in the sufficient conditions   could be (significantly) improved if we can remove (much) {\em fewer edges} from $G$ such that the triangle hypergraph of the remaining graph is {\em odd-cycle free}.

In view of   Theorems~\ref{th:a-fourth}, \ref{th:1/5} and \ref{th:3/2}, the study on the graphs $G$ satisfying $\nu_t(G) / |\mathscr T_G|\ge1/4 $ or $\nu_t(G)/ |E|\ge1/5$ or $|E|/|\mathscr T_G|\ge3/2$ might suggest more insight and foresight for resolving Tuza's conjecture. These graphs are critical in the sense that they are   standing on the border of the resolution.

\redcomment{Let us paying more attention to {\em extremal graphs} $G$ which satisfy Tuza's conjecture with tight ratio $\tau_t(G)/\nu_t(G)=2$. Actually, from Theorem~\ref{c}, Corollary~\ref{cor:1/4} and Theorem~\ref{th:2}, we can get a nice observation: for every irreducible extremal graph $G=(V,E)$, the following three inequalities hold on: $\nu_t(G)/  |\mathscr T_G|\leq 1/3$, $\nu_t(G)/ |E|\leq 1/4$, and $|E|/|\mathscr T_G|\leq 2$. Gregory J. Puleo first notices this observation.}

Another intermediate step towards resolving Tuza's conjecture is investigating its validity for  the classical Erd\H{o}s-R\'{e}nyi random  graph model $\mathcal G(n,p)$. In this paper, we have shown that Tuza's conjecture holds with high probability for    graphs in  $\mathcal G(n,p)$ when $p> \sqrt{3}/2$. It would be nice to prove the same result for {$p\in(0,\sqrt3/2]$}.

\paragraph{The generalization to {linear} 3-uniform hypergraphs.}
Our work has shown   very close relations between triangle packing and covering in graphs and edge (resp. cycle) packing and covering in linear 3-uniform hypergraphs. The theoretical and algorithmic results on linear 3-uniform hypergraphs  (Corollary \ref{cor:1/3} and Lemma \ref{cyclecover}) are crucial for us to establish sufficient conditions for Tuza's conjecture, and to find in strongly polynomial time a ``small'' triangle cover under the conditions (see Corollary \ref{th:1/3} and Theorem \ref{th:2}). Recall that, for any graph $G$, its triangle hypergraph $\mathcal H_G$ is linear 3-uniform, and Tuza's conjecture is equivalent to $\tau(\mathcal H_G)\leq 2\nu(\mathcal H_G)$.  \redcomment{As a natural generalization, one may ask: Does $\tau(\mathcal H)\leq 2\nu(\mathcal H)$ hold for all linear 3-uniform hypergraphs $\mathcal H$? It is easy to see that $\{\mathcal H_G:G$ is a graph\} is {\em properly} contained in the set of linear 3-uniform hypergraphs. Unfortunately, Zbigniew Lonc pointed out there is a simple negative example: The Fano projective plane is an example of a linear 3-uniform hypergraph whose matching number is 1 and transversal number is 3(See Figure~\ref{Fano}). 
 Last but not the least, the arguments in the paper have actually proved the following  stronger result.}

\begin{figure}[h]
\begin{center}
\includegraphics[scale=0.48]{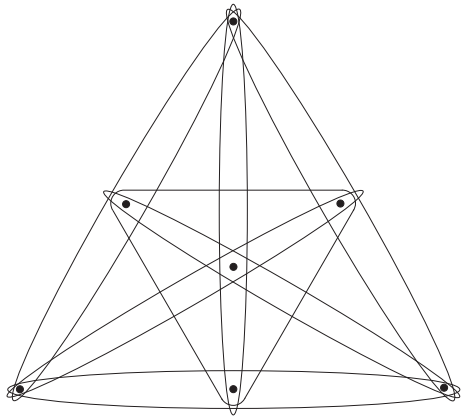}
\caption{\label{Fano}The Fano projective plane}
\end{center}
\end{figure}

\begin{theorem}\label{th:condition'}
Let  $\mathcal H=(\mathcal V,\mathcal E)$ be a linear $3$-uniform hypergraph without isolated vertices. Then a transversal of $\mathcal H$ with cardinality at most $2\nu(\mathcal H)$ can be found in polynomial time, which implies $\tau(\mathcal H)\le2\nu(\mathcal H)$, if one of the following conditions is satisfied:
(i) $\nu(\mathcal H)/|\mathcal E|\ge\frac13$, (ii) $|\mathcal V|/|\mathcal E|\ge2$.\qed
\end{theorem}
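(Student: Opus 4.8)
The plan is to mirror the two arguments already carried out for graphs (Theorem \ref{c} with $c=\frac13$, and Theorem \ref{th:2}), since those proofs never really used anything about graphs beyond the fact that the triangle hypergraph $\mathcal H_G$ is linear, $3$-uniform, and without isolated vertices. So I would simply run the same reductions directly on an abstract linear $3$-uniform hypergraph $\mathcal H=(\mathcal V,\mathcal E)$ with no isolated vertices.

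For condition (i), $\nu(\mathcal H)/|\mathcal E|\ge\frac13$: first apply Corollary \ref{cor:1/3} (i.e.\ Algorithm \ref{alg1}) to find in $O(|\mathcal E|^3)$ time a feedback vertex set $\mathcal S\subseteq\mathcal V$ with $|\mathcal S|\le |\mathcal E|/3\le\nu(\mathcal H)$. Then $\mathcal H\setminus\mathcal S$ is acyclic, so by Theorem \ref{cyclefree} we can find in polynomial time a minimum transversal $\mathcal R$ of $\mathcal H\setminus\mathcal S$ with $|\mathcal R|=\tau(\mathcal H\setminus\mathcal S)=\nu(\mathcal H\setminus\mathcal S)\le\nu(\mathcal H)$. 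Now $\mathcal S\cup\mathcal R$ is a transversal of $\mathcal H$: any edge $e$ of $\mathcal H$ either contains a vertex of $\mathcal S$, or survives in $\mathcal H\setminus\mathcal S$ and hence meets $\mathcal R$. Thus $\tau(\mathcal H)\le|\mathcal S|+|\mathcal R|\le 2\nu(\mathcal H)$, and the set is found in polynomial time. (One small point to note: unlike in Theorem \ref{c}, here $\mathcal S$ consists of \emph{vertices} of $\mathcal H$, which are exactly the objects a transversal is allowed to use, so no further translation step is needed.)

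For condition (ii), $|\mathcal V|/|\mathcal E|\ge2$: here I imitate Theorem \ref{th:2} verbatim. Let $p$ be the number of components of $\mathcal H$; since $\mathcal H$ has no isolated vertex, each component has at least one edge, so $\nu(\mathcal H)\ge p$. By Lemma \ref{cyclecover}, a minimal feedback edge set $\mathcal F\subseteq\mathcal E$ satisfies $|\mathcal F|\le 2|\mathcal E|-|\mathcal V|+p\le p\le\nu(\mathcal H)$, and it is found in $O(|\mathcal E|^2)$ time. Now $\mathcal H\setminus\mathcal F$ is acyclic, so Theorem \ref{cyclefree} gives in polynomial time a minimum transversal $\mathcal R$ of $\mathcal H\setminus\mathcal F$ with $|\mathcal R|=\nu(\mathcal H\setminus\mathcal F)\le\nu(\mathcal H)$. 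For each $F\in\mathcal F$ pick a vertex $x_F\in F$ and set $\mathcal S=\{x_F:F\in\mathcal F\}$, so $|\mathcal S|\le|\mathcal F|\le\nu(\mathcal H)$. Then $\mathcal R\cup\mathcal S$ is a transversal of $\mathcal H$ (every edge of $\mathcal F$ is hit by its chosen vertex; every other edge survives in $\mathcal H\setminus\mathcal F$ and is hit by $\mathcal R$), of size at most $2\nu(\mathcal H)$.

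There is essentially no hard step here — the whole point of Theorem \ref{th:condition'} is that the graph-theoretic wrapper in Section \ref{sec:tuza} was cosmetic. The one place to be slightly careful is the bookkeeping of the edge-count bound in Lemma \ref{cyclecover}: we need the hypothesis ``no isolated vertices'' precisely to guarantee $p\le\nu(\mathcal H)$, and we need $|\mathcal V|\ge 2|\mathcal E|$ (rather than $|E|\le 3|\mathscr T_G|$ style bounds) so that $2|\mathcal E|-|\mathcal V|+p\le p$. Both are immediate. I would therefore present the proof as two short paragraphs, one per condition, each ending in the transversal-size estimate $\tau(\mathcal H)\le 2\nu(\mathcal H)$ and a remark that all steps are polynomial by Corollary \ref{cor:1/3}, Lemma \ref{cyclecover}, and Theorem \ref{cyclefree}.
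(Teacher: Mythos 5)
Your proposal is correct and is exactly the paper's own argument: the paper proves Theorem \ref{th:condition'} by observing that the proofs of Theorem \ref{c} (with $c=\tfrac13$, via Corollary \ref{cor:1/3} and Theorem \ref{cyclefree}) and of Theorem \ref{th:2} (via Lemma \ref{cyclecover} and Theorem \ref{cyclefree}) never use anything about $\mathcal H_G$ beyond linearity, $3$-uniformity, and the absence of isolated vertices, which is precisely the reduction you carry out. Your bookkeeping ($|\mathcal S|\le|\mathcal E|/3\le\nu(\mathcal H)$ in case (i), and $|\mathcal F|\le 2|\mathcal E|-|\mathcal V|+p\le p\le\nu(\mathcal H)$ in case (ii)) matches the paper's, so there is nothing to add.
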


\redcomment{Comparing the above result on linear 3-uniform hypergraphs $\mathcal H$ with its counterpart on graphs presented in Theorem \ref{th:condition}, one might notice  that the condition on the lower bound of $\nu(\mathcal H)/|\mathcal V|$ is missing. This reason is that we do not have a nontrivial constant upper bound on  $\tau(\mathcal H)/|\mathcal V|$.}

\paragraph{Acknowledgements:} \redcomment{The authors are indebted to Gregory J. Puleo and Zbigniew Lonc for their invaluable comments and
suggestions which have greatly improved the presentation of this paper.}

\bibliography{ref}

\end{document}